\documentclass[11pt]{article}
\usepackage[a4paper, margin=2.5cm]{geometry}

\usepackage{graphicx}
\usepackage{amsmath}
\usepackage{amssymb}
\usepackage{amsthm}
\usepackage{booktabs}
\usepackage[colorlinks=true, linkcolor=RoyalBlue, citecolor=RoyalBlue, urlcolor=RoyalBlue]{hyperref}
\usepackage[dvipsnames]{xcolor}
\usepackage{array}
\usepackage{natbib}
\usepackage{setspace}
\usepackage{subcaption}
\usepackage{placeins} 
\usepackage{rotating}
\usepackage{bbm}

\newcommand{\indep}{\mathrel{\perp \!\!\! \perp}}

\newtheorem{theorem}{Theorem}
\newtheorem{proposition}{Proposition}
\newtheorem{assumption}{Assumption}

\DeclareMathOperator{\rank}{rank}

\def\cG{\mathcal{G}}

\def\cL{\mathcal{L}}

\def\cX{\mathcal{X}}

\def\bX{\mathbf{X}}

\def\bbE{\mathbb{E}}

\def\bbP{\mathbb{P}}

\def\bbR{\mathbb{R}}

\let\lac\{
\let\rac\}
\renewcommand{\{}{\left\lac}
\renewcommand{\}}{\right\rac}

\def\1{\mathbbm{1}}

\def\({\left(}
\def\){\right)}

\DeclareMathOperator{\supp}{supp}

\newcommand{\wt}{\widetilde}

\title{A Bayesian Approach to Causal Cure Models}
\author{
Emma Torrini$^{1}$, 
Maïlis Amico$^{2}$,  
Nicolas Molinari$^{2,3}$, 
Clément Berenfeld$^{3}$ \\ \\
{\small$^{1}$Department of Statistics, Computer Science, Applications, University of Florence, Italy} \\
{\small$^{2}$IDESP, University of Montpellier, INSERM, France} \\
{\small$^{3}$PreMeDICaL, Inria, INSERM, University of Montpellier, France}
}
\date{}

\begin{document}


\maketitle

\begin{abstract}
    Time-to-event data often include individuals who will never experience the failure event, and are therefore considered cured. In such settings, frequently encountered in clinical research, a substantial proportion of patients may remain event-free throughout the observation period, leading to the appearance of a survival plateau that is commonly interpreted as evidence of a cured fraction. Analysing such data requires inference on the cure fraction and on the survival function for the uncured subpopulation, tasks which are traditionally achieved with mixture cure models. However, assessing the causal effect of a treatment on these quantities is non-trivial. We consider principal stratification causal estimands, which have been proposed to evaluate effects on the cure fraction and on the survival for an always-uncured stratum. We additionally introduce a novel estimand, which considers the causal effect on the survival for a non-always-cured union of strata. We frame the problem from a Bayesian model-based perspective, which provides a flexible and unified estimation strategy while maintaining a direct link with classical mixture cure model quantities. The reliability of the proposed approach is validated through simulations, demonstrating competitive and robust performance relative to existing methods. Finally, we illustrate its practical usefulness through an application to a randomized trial comparing non-invasive ventilation with standard oxygen therapy in patients with hypoxemic respiratory failure following abdominal surgery. 
\end{abstract}

\noindent 
{\small \textbf{Keywords}: Bayesian inference; causal inference; cure models; principal stratification; survival analysis.}

\clearpage

\section{Introduction}

\paragraph{Motivation} Survival (or time-to-event) variables describe the time until an event occurs and play a central role in many scientific applications. Survival data require specialized statistical methods, as the time of interest is often only partially observed due to right-censoring. In causal settings, where the aim is to evaluate the effect of a treatment on an outcome, the same features complicate the definition and identification of treatment effects on time-to-event variables. A large body of work has therefore focused on adapting classical survival analysis tools to causal inference problems \citep[see][for a review]{voinot2025treatment}.

Survival models commonly operate under the assumption that all individuals will eventually experience the event of interest, so that the survival function goes to zero in finite time. However, there are instances where a fraction of the units may never experience the failure event, even if followed for a long period of time. These subjects can be seen as being \emph{cured}, or long-term survivors, and can be thought of as having infinite failure times. In this instance, the survival function does not go to zero in finite time, but rather levels-off to a certain value, corresponding to the cure proportion.

To illustrate this situation, we consider data from a randomized controlled trial conducted in France to evaluate whether non-invasive ventilation, compared with standard oxygen therapy, improves outcomes among patients with hypoxemic respiratory failure following abdominal surgery \citep[NIVAS study,][]{jaber2016nivas}. We focus on the time to first reintubation or death within 30 days after randomization. Because their health condition does not deteriorate, a fraction of patients may not require reintubation or experience death in the clinical horizon of interest, and can therefore be regarded as long-term survivors with respect to the failure event. 
In this context, the Kaplan–Meier estimator of the survival function does not approach zero within the study period; instead, a long and stable plateau is observed which, together with clinical evidence, suggests the presence of a cure fraction in the data (see e.g. Figure~\ref{fig:nivas_km} in Section~\ref{sec:nivas}). 

In such setting, two quantities become of interest alongside the overall survival function: the cure fraction and the survival function of the uncured individuals. However, because cured subjects have infinite failure times, they are all censored. It is therefore not possible to distinguish cured individuals from uncured and censored ones a priori, and tailored methods are required to achieve a separate identification of these two quantities. For this purpose, classical survival models have been extended to cure models. 
Cure models were originally introduced by \citet{boag1949maximum} and \citet{berkson1952survival} in the form of \emph{mixture cure models}, which assume that the overall survival function arises from a mixture between the survival function for cured and uncured subjects. A more recent class of cure models is \emph{promotion time cure models}, which were proposed by \citet{yakovlev1996stochastic}.
Many different methodological proposals have been made for these two classes of cure models in the literature \citep[see][for a detailed review]{amico2018cure}, but very few has been done when it comes to use cure models from a causal perspective. 

\paragraph{Related work} When the goal is to assess the causal effect of a treatment on a survival outcome with a cure fraction, different inferential targets may be of interest. A first quantity concerns the effect on the cure proportion, which captures the treatment's ability to prevent subjects from experiencing the failure event. A second target is the effect on the overall survival, which reflects the combined impact of the treatment on both the cure probability and the timing of the event among uncured subjects. Because this marginal effect incorporates different mechanisms, one can further separate them, evaluating the causal effect on survival among uncured subjects, in order to determine whether the treatment is able to delay the event for those who remain susceptible. 

In this context, \citet{gao2017estimating} consider causal effect estimation in the presence of a cure fraction and noncompliance, proposing two estimands that target the treatment effect on the cure rate and on the survival for the uncured in the complier subgroup. In a related line of work, \citet{wang2023estimating} investigate causal inference in observational studies for survival data with a cure fraction, focusing on the first two inferential goals described above, namely the effect on the cure proportion and on the overall survival. 
While standard cure models are sufficient for the definition and estimation of these two effects, \citet{wang2024causal} argue that care must be paid in defining a causal estimand for the survival of the uncured individuals, since the set of uncured subjects under treatment might be different from that under control. 
Therefore, while it is possible to estimate the survival function of the uncured under the two treatment conditions, comparing those does not lead to valid causal conclusions, since they do not refer to a common subpopulation. The issue arises because the cure status is determined after treatment assignment and may be affected by it; conditioning on its observed value therefore amounts to selecting individuals based on a post-treatment quantity, inducing selection bias. This motivates the use of frameworks that explicitly handle post-treatment variables. With this idea in mind, \citet{wang2024causal} propose two causal estimands based on principal stratification \citep{frangakis2002principal}, targeting the effect on potential failure times in an ``always-uncured'' stratum, where individuals would not be cured regardless of the treatment received. For this task, they propose estimators based on mixture cure models. Specifically, they formulate a set of assumptions that link (i) the cure probability and the survival for the uncured obtained from mixture cure models, and (ii) principal strata probabilities and survival functions. More recently, and independently of our work, \citet{linero2026bayesian} proposed a Bayesian machine learning approach to causal cure models; however, they use different models and target different estimands than ours.

Our contribution also collocates into the larger literature of Bayesian model-based principal stratification, which was first introduced by \citet{imbens1997bayesian}. While early work focused on continuous or binary endpoints, recent developments have extended this approach to time-to-event outcomes \citep[e.g.,][]{liu2024principal}, and also addressed unique challenges such as survival intermediate variables \citep{ballerini2025evaluating, mattei2024assessing}.

\paragraph{Contributions} We propose a model-based framework for causal inference in survival settings with a cure fraction. Following \citet{wang2024causal}, we adopt a principal stratification perspective and define four principal strata according to individuals’ potential cure status under treatment and control: always-cured, always-uncured, protected, and harmed. Within this framework, we target two main causal quantities: the treatment effect on the cure probability and the treatment effect on the survival among always-uncured individuals. In addition, we introduce a novel causal estimand based on the union of principal strata corresponding to the non-always-cured population. This quantity captures the treatment effect on the failure time in the overall susceptible population, net of the dilution induced by individuals who would be cured under both treatment conditions. We show that the cure probability and the survival function for uncured subjects, that characterize standard mixture cure models, can be expressed in terms of the causal quantities involved in our method. In particular, the likelihood underlying our approach is a natural decomposition of the classical mixture cure model likelihood, providing a direct link between this causal approach and conventional cure modeling. 
This formulation also accommodates situations where additional information on the cure status is available; for instance, in our motivating study, hospital discharge can imply a censored patient is cured from the post-operative outcomes of interest. We discuss how such information, typically overlooked in mixture cure models, can be incorporated into the likelihood to improve inferential precision.
Estimation is conducted within a Bayesian framework and implemented in Stan \citep{carpenter2017stan,stan}, yielding a flexible and efficient procedure that facilitates estimation and reproducibility.
A simulation study demonstrates similar performances of our method compared to that of \citet{wang2024causal}, with the additional benefit that our method is agnostic to the underlying assumptions (e.g., it does not require choosing a substitutional variable), making it easier to implement in practice. Moreover, the model-based formulation facilitates sensitivity analyses and the relaxation of key assumptions. 
We conclude by illustrating the practical usefulness of the proposed methodology with an application to the randomized study on non-invasive ventilation for hypoxemic acute respiratory failure introduced earlier.

\section{Causal cure models}

Consider a sample with $n$ individuals, and let $Z_i$ be the treatment assigned to patient $i$, with $Z_i=1$ if the patient is assigned the investigational treatment, and $Z_i=0$ if the patient is assigned the control. We indicate with $z$ a realization of $Z_i$. Let $X_i$ be a vector of pre-treatment covariates, belonging to some covariate space $\cX$. We let $T_i$ be the time-to-event variable. Often this time can be censored by another variable $C_i$, and one only observes $Y_i = T_i \wedge C_i$. We let $\Delta_i = \mathbb{I}\{T_i \leq C_i\}$ be the event indicator, so that $\Delta_i = 1$ if the event is observed, and $\Delta_i = 0$ if not. In the presence of a cure fraction, $T_i \in \mathbb{R}^+ \cup \{\infty\}$, where $T_i=\infty$ indicates that the patient cannot experience the failure event in finite time, and therefore is cured. Consequently, we can define $B_i = \mathbb{I}\{T_i<\infty\}$ the uncured indicator, such as $B_i = 1$ if the subject is uncured, and $B_i = 0$ otherwise. Note that $B_i$ is only partially observed due to censoring. 
In such setting, the marginal survival function $S(t \mid X_i,Z_i) = \bbP(T_i >t \mid X_i,Z_i)$ can be defined as a mixture between the survival function for cured and uncured subjects \citep{farewell1982mixture}, such as
\begin{equation}
\label{eq:mixture} 
    S(t \mid X_i, Z_i) = 1 - p(X_i, Z_i) + p(X_i, Z_i) \ S_{\mathrm{U}}(t \mid X_i, Z_i),
\end{equation}
where $p(X_i, Z_i) = \bbP(B_i=1 \mid X_i,Z_i)$ denotes the probability of being uncured, and $S_{\mathrm{U}}(t \mid X_i,Z_i) = \bbP(T_i > t \mid X_i,Z_i,B_i = 1)$ represents the survival function for uncured individuals. 

To define causal quantities, we adopt the potential outcome framework \citep{rubin1974estimating}.  
Let $T_i(z)$ be the potential time to the outcome under treatment $z$, that is the outcome of patient $i$ if, possibly contrary to the fact, they had received treatment $z$, and let $B_i(z)$ be the potential status of being uncured under treatment $z$. Similarly, the potential censoring time is denoted as $C_i(z)$. 

We adopt the principal stratification framework, using the joint values of the potential uncured status under treatment and control to define four latent principal strata as in \cite{wang2024causal}:
\begin{itemize}
    \item Always-cured (CC): $\{i: B_i(1)=0, B_i(0)=0\}$, i.e., patients who would be cured regardless of treatment assignment; 
    \item Protected (CU): $\{i: B_i(1)=0, B_i(0)=1\}$, i.e., patients who would be cured under the investigational treatment but uncured under control;  
    \item Harmed (UC): $\{i: B_i(1)=1, B_i(0)=0\}$, i.e., patients who would be uncured under the investigational treatment but cured under control; 
    \item Always-uncured (UU): $\{i: B_i(1)=1, B_i(0)=1\}$, i.e., patients who would be uncured regardless of treatment assignment. 
\end{itemize}
We let $\cG := \{\mathrm{CC,CU,UC,UU}\}$ be the set of all strata and use $G_i \in \cG$ to indicate the principal stratum of patient $i$. It is worth noting that $B_i(1)$ and $B_i(0)$ are generally affected by the treatment, but the ordered pair $G_i=(B_i(1),B_i(0))$ is not. Therefore, any comparison of potential outcomes within a principal stratum is a well-defined causal effect, since one can condition on the principal stratum $G_i$ just as with any pre-treatment covariate. 

Patients in the CC stratum may receive either treatment, as neither option would lead to a failure event. In contrast, patients in the CU and UC strata benefit from only one of the two treatments (the investigational and the control treatment, respectively) and should therefore ideally receive the treatment that is effective for them. Patients in the UU stratum would not be cured by either treatment, and the preferred option is generally the treatment that maximizes survival. These considerations give rise to several inferential objectives. First, it is crucial to estimate principal stratum probabilities given the observed baseline covariates. Since the principal stratum is latent, characterizing strata using these covariates provides insight into the likely stratum membership of each patient, which can inform clinical decision-making. Second, comparing the survival between treatments within the UU stratum helps determining which option offers the best prognosis to these patients. Finally, comparisons of survival across unions of principal strata may also be of interest and provide additional clinically relevant summaries.

\subsection{Causal estimands}
\label{sec:estimands}

Denote with $p^{(z)}(X_i) = \bbP(B_i(z)=1\mid X_i)$ the probability of being uncured under treatment $z\in\{0,1\}$, and let $\pi_{g}(X_i) = \bbP(G_i=g\mid X_i)$ be the probability of each stratum $g \in \cG$. Note that $p^{(1)}(X_i)=\pi_{\mathrm{UC}}(X_i)+\pi_{\mathrm{UU}}(X_i)$ and $p^{(0)}=\pi_{\mathrm{CU}}(X_i)+\pi_{\mathrm{UU}}(X_i)$ (see Section~\ref{sec:bayes}). We will denote the marginal probability with an overline, so that $\bar p^{(z)} = \bbP(B_i(z)=1)$ and $\bar \pi_g = \bbP(G_i = g)$. One of the primary focuses in settings with a cured proportion is to evaluate the difference in cure rates, whose conditional expression is  
\begin{equation*}
    \delta(X_i)= (1-p^{(1)}(X_i))-(1-p^{(0)}(X_i)) = \pi_{\mathrm{CU}}(X_i) - \pi_{\mathrm{UC}}(X_i), 
\end{equation*}
and whose marginal counterpart writes
\begin{equation}
\label{eq:delta}
    \bar\delta = (1-\bar p^{(1)}) - (1-\bar p^{(0)}) = \bar\pi_{\mathrm{CU}} - \bar\pi_{\mathrm{UC}}.
\end{equation}

Another quantity of interest is a comparison between the survival of patients in a given stratum. We let, for any subset $K \subset \cG$,
$$
S_K^{(z)}(t) := \bbP(T_i(z) > t \mid G_i \in K).
$$
We consider two causal estimands for a given subset of strata $K \subset \cG$. The first is the principal survival causal effect (PSCE), which compares the survival probability of patients in the subset $K$ under the two treatment conditions at a fixed time $t>0$: 
\begin{equation}
\label{eq:tau}
    \tau_{K}(t) = S_{K}^{(1)}(t) - S_{K}^{(0)}(t) .
\end{equation}
Second, for an overall comparison, we consider the principal difference in restricted mean survival time (RMST) \citep{royston2013restricted}, which compares the mean survival time of patients in the subset $K$ under the two treatment conditions until a pre-specified time $t^*>0$:
\begin{equation}
\label{eq:rmst}
    \tau_{K}^{{\rm RMST}}(t^*) = \int_0^{t^*} S_{K}^{(1)}(t) - S_{K}^{(0)}(t) \, \mathrm{d}t .
\end{equation}

\subsection{Assumptions and identification} \label{sec:identification}

Identification of principal causal effects is inherently challenging due to the latent nature of the strata. In our setting, this aspect is further amplified, as censoring generates situations in which no principal stratum can be excluded a priori.
Table~\ref{tab:profiles} shows what each observed profile reveals about a patient's principal stratum membership: when the outcome is censored, as in the first and third row, the patient could belong to any of the four strata. 
We now discuss the assumptions and identification strategies that we adopt. 

\begin{table}[h]
    \centering
    \caption{Principal stratum knowledge according to the observation.}
    \label{tab:profiles}
    \begin{tabular}{ccc}
        \hline
        Treatment $Z_i$ & Event status $\Delta_i$ & Possible stratum $G_i$ \\
        \hline
        $1$ & $0$ & $\rm CC$,$\rm CU$,$\rm UC$,$\rm UU$ \\
        $1$ & $1$ & $\rm UC$,$\rm UU$ \\
        $0$ & $0$ & $\rm CC$,$\rm CU$,$\rm UC$,$\rm UU$ \\
        $0$ & $1$ & $\rm CU$,$\rm UU$ \\
        \hline
    \end{tabular}
\end{table}

We observe, for each participant $i \in [n]$, a tuple $(Z_i, X_i, Y_i, \Delta_i)$ with latent potential outcomes $(T_i(1), T_i(0), C_i(1), C_i(0))$. We assume that the tuples $(Z_i, X_i, T_i(1), T_i(0), C_i(1), C_i(0))$ are i.i.d. and we require the following assumptions for all $i \in [n]$.

\begin{assumption}[SUTVA]  
\label{ass:sutva}
$T_i = T_i(z)$ and $C_i = C_i(z)$ when $Z_i=z$.    
\end{assumption}

\begin{assumption}[Ignorability]
\label{ass:ignorability}
$Z_i \indep \{T_i(1),T_i(0),C_i(1),C_i(0)\} \mid X_i$.
\end{assumption}

\begin{assumption}[Positivity of treatment]
\label{ass:positivity}
$0<\bbP(Z_i=1 \mid X_i)<1$.
\end{assumption}

\begin{assumption}[Conditionally independent censoring]
\label{ass:cond-indep-cens}
$C_i(z) \indep T_i(z) \mid X_i, Z_i=z$ for $z\in \{0,1\}$.
\end{assumption}

We let $t_{\max}^{(z)}(X_i)$ be the supremum of the support of the law of $T_i(z)\mid X_i, B_i(z)=1$. 

\begin{assumption}[Positivity of censoring]
\label{ass:suff-large-cens} 
$\bbP(C_i(z) > t \mid X_i) > 0$ for all time $t < t^* \wedge t_{\max}^{(z)}(X_i)$ and $z \in \{0,1\}$.
\end{assumption}

Assumption~\ref{ass:sutva} is a standard identifying assumption in causal inference, and postulates no interference between units and no different versions of the same treatment. 
Assumptions~\ref{ass:ignorability} and \ref{ass:positivity} are implied when the treatment is randomized. In general, they require that the treatment can be thought of as randomly assigned when conditioning on the covariates, and that every unit has a positive probability of receiving either treatment. Assumption~\ref{ass:cond-indep-cens}, which is commonly invoked in survival models, states that the potential censoring time is independent of the potential outcome time, conditional on the covariates and the treatment assignment. Assumption~\ref{ass:suff-large-cens} postulates that there is a positive probability that the survival function reaches the limiting level before censoring.

\begin{proposition} \label{prp:ident_1}
    Under Assumptions~\ref{ass:sutva}--\ref{ass:suff-large-cens}, the survival curves $S^{(z)}(t|X_i)$ and $S^{(z)}(t)$ are identifiable for all $t\leq t^*$. Furthermore, if $t^* \geq t_{\rm \max}^{(z)}(X_i)$ a.s. for both $z\in\{0,1\}$, then the difference in cure rates $\delta(X_i)$ and $\bar \delta$ are also identifiable.
\end{proposition}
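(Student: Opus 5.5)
The plan has two stages. First reduce the causal survival curve to an observable conditional survival curve. Then identify that curve by the standard product-limit argument, and read off the cure rate as its plateau.

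\emph{Reduction to observables.} Write $S^{(z)}(t\mid X_i)=\bbP(T_i(z)>t\mid X_i)$. By Assumption~\ref{ass:ignorability}, $\bbP(T_i(z)>t\mid X_i)=\bbP(T_i(z)>t\mid X_i,Z_i=z)$, and the right-hand side is well defined by Assumption~\ref{ass:positivity}. Assumption~\ref{ass:sutva} then gives $\bbP(T_i>t\mid X_i,Z_i=z)$. The same steps apply to the censoring time, so on the event $\{Z_i=z\}$ the observed pair $(Y_i,\Delta_i)$ is generated from $(T_i(z),C_i(z))$. By Assumption~\ref{ass:cond-indep-cens}, these two times are conditionally independent given $(X_i,Z_i=z)$.

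\emph{Identification of the conditional curve.} The hazard of $T_i(z)$ given $(X_i,Z_i=z)$ is expressed through observable sub-distributions:
\[
\mathrm{d}\Lambda^{(z)}(t\mid X_i)=\frac{\bbP(Y_i\in \mathrm{d}t,\Delta_i=1\mid X_i,Z_i=z)}{\bbP(Y_i\geq t\mid X_i,Z_i=z)} .
\]
This holds because independence factorizes the at-risk probability as $\bbP(T_i(z)\geq t\mid\cdot)\,\bbP(C_i(z)\geq t\mid\cdot)$, and the censoring factor cancels in the ratio. The censoring factor is positive for $t<t^*\wedge t^{(z)}_{\max}(X_i)$ by Assumption~\ref{ass:suff-large-cens}, after passing from conditioning on $X_i$ to conditioning on $(X_i,Z_i=z)$ via ignorability. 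The survival curve is then recovered as the product integral $\prod_{(0,t]}(1-\mathrm{d}\Lambda^{(z)})$.

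\emph{Beyond $t_{\max}^{(z)}(X_i)$ and marginalization.} For $t\geq t_{\max}^{(z)}(X_i)$, the uncured survival $S_{\mathrm{U}}$ is zero. Hence $S^{(z)}(t\mid X_i)=1-p^{(z)}(X_i)$ is constant in $t$, and equals its value at (the left limit at) $t_{\max}^{(z)}(X_i)$. I expect the main obstacle to be this boundary point. One must check that the hazard can be integrated up to and including $t_{\max}$, or else extend by continuity: the curve is identified on $[0,t_{\max})$, it is constant afterwards, and it has at most an atom at $t_{\max}$ that must be handled. This requires care with the strict inequality in Assumption~\ref{ass:suff-large-cens}. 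The marginal curve follows by averaging, $S^{(z)}(t)=\bbE[S^{(z)}(t\mid X_i)]$, which is identified because the law of $X_i$ is observed.

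\emph{Cure rates.} If $t^*\geq t^{(z)}_{\max}(X_i)$ a.s., the conditional curve is identified on the whole plateau. Taking $t\uparrow\infty$, or equivalently $t=t^*$, gives
\[
1-p^{(z)}(X_i)=\lim_{t\to\infty}S^{(z)}(t\mid X_i)=S^{(z)}(t^*\mid X_i),
\]
using the mixture representation \eqref{eq:mixture} applied to potential outcomes. Therefore $\delta(X_i)=S^{(1)}(t^*\mid X_i)-S^{(0)}(t^*\mid X_i)$ is identified. Averaging over $X_i$ identifies $\bar\delta$ in \eqref{eq:delta}.
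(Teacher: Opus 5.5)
Your proposal follows the paper's proof. It uses the same reduction $S^{(z)}(t\mid X_i)=\bbP(T_i>t\mid X_i,Z_i=z)$ via Assumptions~\ref{ass:sutva}--\ref{ass:positivity}, identifies that curve by the standard censored-data argument, averages over $X_i$, and reads off $S^{(z)}(t^*\mid X_i)=1-p^{(z)}(X_i)$ because $T_i(z)>t^*$ coincides with $B_i(z)=0$ when $t^*\geq t^{(z)}_{\max}(X_i)$. The only difference is that you make the censored-data step explicit through the observable hazard and the product integral, where the paper only invokes ``standard survival analysis arguments''.

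The boundary point you leave open is also passed over by the paper, and it is closed by continuity. The paper implicitly gives $T_i(z)\mid X_i,B_i(z)=1$ a density $f^{(z)}_{\mathrm{U}}$. Hence $S^{(z)}(\cdot\mid X_i)$ has no atom at $t^{(z)}_{\max}(X_i)$ or at $t^*$, and its value there is the left limit of already-identified values. You should state this explicitly, because without it the claim can fail. Assumption~\ref{ass:suff-large-cens} allows $\bbP(C_i(z)\geq t^{(z)}_{\max}(X_i)\mid X_i)=0$, and in that case the mass of an atom at $t^{(z)}_{\max}(X_i)$ cannot be separated from the cured mass.
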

The proof of this proposition relies on the identification formula
$$
S^{(z)}(t\mid X_i) = S(t\mid X_i,Z=z),
$$
which links the counterfactual survival function to the observable one $S(t\mid X_i,Z=z)$ defined in Equation~\eqref{eq:mixture}. It can be found in Appendix~\ref{app:proofs}. 
    
We now turn to the identification of quantities that are specific to strata. This requires stronger assumptions.

\begin{assumption}[Conditional monotonicity]
\label{ass:cond-mono}
It holds $\pi_{\rm UU}(X_i) = \min\{p^{(1)}(X_i),p^{(0)}(X_i)\}$ almost surely.
\end{assumption}

Monotonicity assumptions are quite common in the principal stratification literature \citep{angrist1996identification}. Assumption~\ref{ass:cond-mono} is a particular type of monotonicity, implying that either $\pi_{\rm UC}(X_i)$ or $\pi_{\rm CU}(X_i)$ is zero for any given patient $i$. This assumption includes as a special case the classical monotonicity assumption, where researchers completely rule out the existence of one stratum, and it can be quite reasonable to assume in some contexts. For instance, in our motivating study, one might expect that a patient that is cured under control will be cured under treatment as well, so that $\pi_{\mathrm{UC}}(X_i)=0$ for all $i \in [n]$ almost-surely. 

\begin{proposition} \label{prp:ident_2}
    We assume that $t^* > t_{\max}^{(z)}(X_i)$ a.s for $z \in \{0,1\}$. Then, under Assumptions~\ref{ass:sutva}--\ref{ass:cond-mono}, the principal strata proportions $\pi_g(X_i)$ and $\bar \pi_g$ for $g\in \cG$ are identifiable. 
\end{proposition}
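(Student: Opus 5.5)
The plan is to reduce identification of the four strata proportions to identification of the two marginal uncured probabilities $p^{(1)}(X_i)$ and $p^{(0)}(X_i)$, and then use Assumption~\ref{ass:cond-mono} to split them. The first step is to show that $p^{(z)}(X_i)$ is identified for each $z\in\{0,1\}$. By the identification formula behind Proposition~\ref{prp:ident_1}, $S^{(z)}(t\mid X_i)=S(t\mid X_i,Z=z)$ is identified for all $t\le t^*$. By definition of $t_{\max}^{(z)}(X_i)$ as the supremum of the support of $T_i(z)\mid X_i,B_i(z)=1$, we have
\begin{equation*}
S^{(z)}(t\mid X_i) = 1-p^{(z)}(X_i) \quad \text{for all } t\ge t_{\max}^{(z)}(X_i),
\end{equation*}
since an uncured unit has $T_i(z)\le t_{\max}^{(z)}(X_i)$ almost surely. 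The strict inequality $t^*>t_{\max}^{(z)}(X_i)$ ensures that there is a time, for instance $t=t^*$ itself, lying in the identified range and at or beyond the plateau. Hence $p^{(z)}(X_i)=1-S(t^*\mid X_i,Z=z)$ is identified. This is essentially the second half of Proposition~\ref{prp:ident_1}, which I will invoke directly. Positivity of treatment guarantees that the conditional law given $Z=z$ is well defined for both arms, and positivity of censoring guarantees that the observed-data survival (for example through the Kaplan--Meier/product-limit representation under Assumption~\ref{ass:cond-indep-cens}) recovers $S(t\mid X_i,Z=z)$ up to $t^*$.

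The second step is the algebra over strata. The four probabilities satisfy three linear constraints:
\begin{equation*}
\pi_{\rm UC}+\pi_{\rm UU}=p^{(1)},\quad \pi_{\rm CU}+\pi_{\rm UU}=p^{(0)},\quad \pi_{\rm CC}+\pi_{\rm CU}+\pi_{\rm UC}+\pi_{\rm UU}=1.
\end{equation*}
With one degree of freedom left, Assumption~\ref{ass:cond-mono} pins it down by setting $\pi_{\rm UU}(X_i)=\min\{p^{(1)}(X_i),p^{(0)}(X_i)\}$. The remaining proportions then follow:
\begin{equation*}
\pi_{\rm UC}=(p^{(1)}-p^{(0)})_+,\qquad \pi_{\rm CU}=(p^{(0)}-p^{(1)})_+,\qquad \pi_{\rm CC}=1-\max\{p^{(1)},p^{(0)}\}.
\end{equation*}
Each of these is a measurable function of identified quantities, so each is identified.

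The third step handles the marginal proportions. Since $\bar\pi_g=\bbE[\pi_g(X_i)]$ and the law of $X_i$ is observed, each $\bar\pi_g$ is identified by averaging the conditional proportions over the covariate distribution.

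The main obstacle I expect is the first step: making rigorous the claim that $S(t\mid X_i,Z=z)$ is identified at $t=t^*$ (or just beyond $t_{\max}^{(z)}$) almost surely in $X_i$. Assumption~\ref{ass:suff-large-cens} only grants positive censoring survival for $t<t^*\wedge t_{\max}^{(z)}(X_i)$, so I will need a left-continuity or limit argument. Specifically, the product-limit formula identifies $S$ up to $t_{\max}^{(z)}(X_i)$, and because no uncured events occur after that time, the curve is constant on $[t_{\max}^{(z)}(X_i),t^*]$. This lets me take the limit from the left to obtain the plateau value $1-p^{(z)}(X_i)$, with care taken over a possible atom at $t_{\max}^{(z)}(X_i)$.
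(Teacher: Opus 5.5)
Your proposal is correct and follows the paper's proof essentially verbatim. You identify $p^{(1)}(X_i)$ and $p^{(0)}(X_i)$ via Proposition~\ref{prp:ident_1}, set $\pi_{\rm UU}=\min\{p^{(1)},p^{(0)}\}$ by Assumption~\ref{ass:cond-mono}, solve the linear constraints for the other three strata, and average over $X_i$ to get $\bar\pi_g$ (a step the paper leaves implicit).

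The paper handles the first step simply by invoking Proposition~\ref{prp:ident_1}, so your endpoint concern is not needed for this statement. If you do pursue it, note that the left limit at $t_{\max}^{(z)}(X_i)$ equals $1-p^{(z)}(X_i)$ only when the uncured law has no atom there. The paper tacitly assumes this by working with densities $f_{\mathrm{U}}^{(z)}$.
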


The proof can be found in Appendix~\ref{app:proofs}. One needs one extra assumption to access the stratum-specific survival function $S_g^{(z)}(t|X_i)$. The first one, called \emph{substitution relevance}, has been proposed by \cite{wang2024causal}. It amounts to postulate the existence of a subset $V_i$ of covariates that have an effect on the potential survival times $T_i(z)$ only through how it affects the stratum $G_i$. 

\begingroup

\renewcommand{\theassumption}{7.A}
\begin{assumption}[Substitution relevance]
\label{ass:vw}
There exists a latent decomposition of $X_i = (U_i,V_i)$ such that $V_i \indep T_i(z) \mid Z_i=z,G_i,U_i$ and $V_i \not\indep B_i(z) \mid U_i, B_i(1-z)=1$ for $z \in\{0,1\}$.
\end{assumption}

In practice, it can be quite hard to find a substitutional variable $V_i$ or to even justify its existence. We propose therefore an alternative identification assumption, which posits a Cox model on the survival function and a generic condition on the strata probability $\pi_g(\cdot)$. Although we have formulated this assumption for the Cox model because of its popularity, the proof is based on a genericity argument and adapt to other general classes of models. 

\renewcommand{\theassumption}{7.B}
\begin{assumption}[Cox model]
\label{ass:cox} We assume that the covariate space $\cX$ is an open subset of $\bbR^d$ with $d>4$, and that the survival functions $S_g^{(z)}(\cdot \mid X_i)$ for $g \in \neg {\rm CC}$ follow a Cox model. Furthermore, we assume that for $g \in \{\rm CU, UC\}$, either $\supp \pi_g \neq \supp \pi_{\rm UU}$ or the ratio $x\mapsto \pi_g(x)/\pi_{\rm UU}(x)$ is twice differentiable and its Hessian is full rank.
\end{assumption}
\endgroup

\begin{proposition} \label{prp:ident_3}
    We assume that $t^* > t_{\max}^{(z)}(X_i)$ a.s for $z \in \{0,1\}$. Then, under Assumptions~\ref{ass:sutva}--\ref{ass:cond-mono}, and under one of Assumption~\ref{ass:vw} or \ref{ass:cox}, the survival functions $S^{(z)}_g(t|X_i)$ and $S^{(z)}_g(t)$ are identifiable for all $t \leq t^*$.
\end{proposition}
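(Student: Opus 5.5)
The starting point is the decomposition of the observable uncured survival function into stratum-specific pieces. By Proposition~\ref{prp:ident_1}, $S^{(z)}(t\mid X_i)=S(t\mid X_i,Z_i=z)$ is identified for $t\le t^*$. Because $t^*>t_{\max}^{(z)}(X_i)$, the plateau $1-p^{(z)}(X_i)$ is identified, and so is the uncured survival $S_{\rm U}(t\mid X_i,Z_i=z)$. By Proposition~\ref{prp:ident_2}, the $\pi_g(X_i)$ are identified as well. Under treatment $z=1$, the uncured population is the union of UC and UU. Ignorability and SUTVA then give
\begin{equation*}
p^{(1)}(x)\,S_{\rm U}(t\mid x,Z=1)=\pi_{\rm UC}(x)\,S^{(1)}_{\rm UC}(t\mid x)+\pi_{\rm UU}(x)\,S^{(1)}_{\rm UU}(t\mid x),
\end{equation*}
and the analogous identity holds for $z=0$ with CU in place of UC. Conditional monotonicity (Assumption~\ref{ass:cond-mono}) implies that for each $x$ at most one of $\pi_{\rm UC}(x),\pi_{\rm CU}(x)$ is nonzero. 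So on the set where $\pi_{\rm UC}(x)=0$, the $z=1$ identity already identifies $S^{(1)}_{\rm UU}(\cdot\mid x)=S_{\rm U}(\cdot\mid x,Z=1)$, and symmetrically for $z=0$. The real task is to separate the two components on the set where both mixture weights are positive. Marginal curves $S_g^{(z)}(t)$ then follow by integrating $\pi_g(x)S_g^{(z)}(t\mid x)$ against the identified law of $X_i$ and dividing by $\bar\pi_g$.

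Under Assumption~\ref{ass:vw}, I would proceed as in \cite{wang2024causal}. Write $x=(u,v)$. Substitution relevance says $S_g^{(z)}(t\mid u,v)=S_g^{(z)}(t\mid u)$, so for fixed $u$ the identity above reads
\begin{equation*}
m(t\mid u,v)=a(u,v)\,A(t\mid u)+b(u,v)\,B(t\mid u),
\end{equation*}
where $m$ and the weights $a,b$ are known and the two unknown curves $A,B$ do not depend on $v$. The relevance condition ensures that the ratio $\pi_{\rm UC}/\pi_{\rm UU}$ varies with $v$ for fixed $u$. Picking two values $v,v'$ with distinct ratios gives a $2\times2$ linear system in $(A,B)$ with nonzero determinant, which identifies both curves for $t\le t^*$.

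Under Assumption~\ref{ass:cox}, I would write $S_g^{(z)}(t\mid x)=\exp\{-\Lambda_g^{(z)}(t)e^{\beta_g^{(z)\top}x}\}$ and divide by $\pi_{\rm UU}$, giving
\begin{equation*}
\frac{p^{(1)}(x)S_{\rm U}(t\mid x,1)}{\pi_{\rm UU}(x)}=r(x)\,e^{-\Lambda_1(t)e^{\beta_1^\top x}}+e^{-\Lambda_2(t)e^{\beta_2^\top x}},\qquad r=\frac{\pi_{\rm UC}}{\pi_{\rm UU}}.
\end{equation*}
Suppose two parameter sets $(\Lambda_1,\beta_1,\Lambda_2,\beta_2)$ and $(\Lambda_1',\beta_1',\Lambda_2',\beta_2')$ produce the same left side. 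The argument splits into two cases.
\begin{itemize}
\item If $\supp\pi_{\rm UC}\neq\supp\pi_{\rm UU}$, there is an open set where $r=0$. On it the UU Cox model is identified, since a Cox model is identified from an open covariate set. Substituting back, $r(x)S^{(1)}_{\rm UC}$ is identified, and hence so is $S^{(1)}_{\rm UC}$.
\item Otherwise, $r$ is a fixed known function and we get a functional equation in $x$. Solving for $r(x)$ gives
\begin{equation*}
r(x)=\frac{e^{-\Lambda'_2 e^{\beta_2'^\top x}}-e^{-\Lambda_2 e^{\beta_2^\top x}}}{e^{-\Lambda_1 e^{\beta_1^\top x}}-e^{-\Lambda'_1 e^{\beta'_1 {}^\top x}}}
\end{equation*}
wherever the parameters differ. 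The right side depends on $x$ only through the four linear forms $\beta_1^\top x,\beta_1'^\top x,\beta_2^\top x,\beta_2'^\top x$, so it is constant along a subspace of dimension at least $d-4>0$. Its Hessian therefore has rank at most $4<d$, contradicting the full-rank assumption. Hence the two parameter sets coincide, where needed after first fixing $t$ and using that $\Lambda$ can vary.
\end{itemize}

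I expect the main obstacle to be this genericity step: handling degenerate subcases where some but not all parameters coincide, or where the denominator vanishes, and making the rank argument rigorous on open sets. I would first establish the rank bound cleanly for $t$ fixed. I would then treat the partially coinciding cases separately, each of which reduces to fewer linear forms and so still forces a Hessian rank below $d$.
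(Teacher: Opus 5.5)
Your proposal is correct and follows essentially the same route as the paper's proof. Under Assumption~\ref{ass:vw} you use the same $2\times 2$ linear system at fixed $u$, built from two values of $v$ with distinct mixture weights. Under Assumption~\ref{ass:cox} you use the same two-case argument: either there is an open set where one stratum weight vanishes, so the Cox parameters are pinned down there and then propagate; or the supports are equal, so $\pi_{\rm UC}/\pi_{\rm UU}$ factors through at most four linear forms and has Hessian rank at most $4<d$. Your explicit passage to the marginal curves via Equation~\eqref{eq:Sg}, and your flagging of the degenerate subcases (vanishing denominator, partially coinciding parameters), only make explicit steps the paper leaves implicit.
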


The proof is provided in Appendix~\ref{app:proofs}. 

The identification challenges discussed at the beginning of this section motivate the use of a Bayesian model-based approach, whose computations do not require full identification: as long as the prior is proper, the posterior remains well-defined even under partial identification \citep{lindley1972bayesian}. Within such framework, some of the identifying assumptions just outlined can be imposed or relaxed, which facilitates sensitivity analyses.

\section{Bayesian principal stratification framework}  \label{sec:bayes}

We now present a model-based approach that allows estimation of the causal effects introduced in Section~\ref{sec:estimands}. We start by deriving the likelihood of the mixture cure model in Equation~\eqref{eq:mixture} for a study with two groups, already introduced in \cite{wang2024causal}. For a time $t>0$, define $S_{\mathrm{U}}^{(z)}(t \mid X_i) = \bbP(T_i(z)>t \mid X_i, B_i(z)=1)$ the conditional survival function of the potential outcome for uncured patients under treatment $z\in\{0,1\}$, and let $f_{\mathrm{U}}^{(z)}(t \mid X_i)$ be the associated probability density function. Recall that $p^{(z)}(X_i)=\bbP(B_i(z)=1 \mid X_i)$. 

\begin{theorem}
\label{thm:likelihood-standard}
    Under Assumptions~\ref{ass:sutva}, \ref{ass:ignorability}, \ref{ass:cond-indep-cens}, the likelihood function is 
    \begin{equation*}
        \prod_{z\in\{0,1\}} \prod_{i: Z_i=z} 
        \big[ 
        1-p^{(z)}(X_i) + 
        p^{(z)}(X_i) S_{\mathrm{U}}^{(z)}(Y_i \mid X_i)
        \big] ^{(1-\Delta_i)}
        \big[ 
        p^{(z)}(X_i) f_{\mathrm{U}}^{(z)}(Y_i \mid X_i)
        \big] ^{\Delta_i} ,
    \end{equation*} 
    which is the product of two mixture cure model likelihoods under the two treatment conditions.
\end{theorem}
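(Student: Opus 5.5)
We will derive the likelihood by treating the observed data $(Z_i,X_i,Y_i,\Delta_i)$ as i.i.d. and factorizing each individual contribution. By the i.i.d. assumption the likelihood is a product over $i$. For each $i$ we write the joint density of $(Y_i,\Delta_i)$ given $(X_i,Z_i)$. The factors for the law of $Z_i\mid X_i$ and for the marginal law of $X_i$ are dropped, since they do not involve $p^{(z)}$ or $S_{\mathrm U}^{(z)}$. Splitting the product according to $Z_i=z$ gives the outer product over $z\in\{0,1\}$.

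The first step is to replace observed quantities by potential ones. By Assumption~\ref{ass:sutva}, on $\{Z_i=z\}$ we have $Y_i=T_i(z)\wedge C_i(z)$ and $\Delta_i=\mathbb{I}\{T_i(z)\le C_i(z)\}$. Assumption~\ref{ass:ignorability} lets us drop the conditioning on $Z_i=z$ inside the laws of $(T_i(z),C_i(z))$ given $X_i$. In particular,
$$\bbP(T_i(z)>t\mid X_i,Z_i=z)=\bbP(T_i(z)>t\mid X_i)=1-p^{(z)}(X_i)+p^{(z)}(X_i)S^{(z)}_{\mathrm U}(t\mid X_i).$$
The last equality follows from the law of total probability over $B_i(z)\in\{0,1\}$, using $T_i(z)=\infty>t$ when $B_i(z)=0$. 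By the same decomposition, the sub-density of $T_i(z)$ on $(0,\infty)$ is $p^{(z)}(X_i)f^{(z)}_{\mathrm U}(t\mid X_i)$, because the cured part puts its mass at $\infty$.

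The second step uses Assumption~\ref{ass:cond-indep-cens}, which makes $T_i(z)$ and $C_i(z)$ independent given $X_i$ and $Z_i=z$. The standard censored-data computation then gives the two contributions:
\begin{itemize}
\item for $\Delta_i=1$: $\bbP(Y_i\in dy,\Delta_i=1\mid X_i,Z_i=z)=f_{T(z)}(y\mid X_i)\,\bbP(C_i(z)\ge y\mid X_i)\,dy$;
\item for $\Delta_i=0$: $\bbP(Y_i\in dy,\Delta_i=0\mid X_i,Z_i=z)=\bbP(T_i(z)>y\mid X_i)\,f_{C(z)}(y\mid X_i)\,dy$.
\end{itemize}
A censored observation may come from a cured unit, whose $T_i(z)=\infty$ always exceeds $C_i(z)$, or from an uncured unit still at risk. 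Both cases are captured by the mixture survival above.

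The last step treats censoring as non-informative: the censoring factors do not involve $(p^{(z)},S^{(z)}_{\mathrm U})$. This is the usual implicit assumption that the censoring parameters are distinct from the outcome parameters. Dropping these factors and inserting the expressions from the first step yields exactly the displayed product. For each $z$, this is Farewell's mixture cure likelihood for the arm $\{i:Z_i=z\}$. The main care point is handling the atom of $T_i(z)$ at $\infty$ correctly, so that cured units enter only through the censored term. This needs $C_i(z)<\infty$, which is implicit here.
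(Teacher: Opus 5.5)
Your proposal is correct and follows essentially the same route as the paper's proof. Both drop the laws of $X_i$ and $Z_i\mid X_i$, use conditionally independent censoring to reduce each contribution to $\bbP(T_i=Y_i\mid\cdot)$ or $\bbP(T_i>Y_i\mid\cdot)$, decompose over cure status, and invoke SUTVA and ignorability to obtain $p^{(z)}$, $S_{\mathrm U}^{(z)}$ and $f_{\mathrm U}^{(z)}$. The differences are cosmetic: you pass to potential outcomes first rather than last, and you state explicitly the non-informative censoring (parameter distinctness) and finiteness of $C_i(z)$ that the paper uses only implicitly in its proportionality step and in ``$Y_i<\infty$''.
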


The proof of Theorem~\ref{thm:likelihood-standard} is provided in Appendix~\ref{app:proofs}. A direct link exists between the mixture cure model components in Theorem~\ref{thm:likelihood-standard} and the principal strata probabilities and survival functions, which constitute the objective of our inference. Under the investigational treatment, the set of uncured patients corresponds to the union of the UC and UU strata. Therefore, the probability of being uncured under treatment is
\begin{equation}
\label{eq:peq1}
    p^{(1)}(X_i)= \pi_{\mathrm{UC}}(X_i) + \pi_{\mathrm{UU}}(X_i) ,
\end{equation}
and the survival function for uncured patients under treatment is 
\begin{equation}
\label{eq:Seq1}
    S_{\rm U}^{(1)}(t\mid X_i)= 
    \frac{\pi_{\mathrm{UC}}(X_i) S_{\mathrm{UC}}^{(1)}(t\mid X_i) + \pi_{\mathrm{UU}}(X_i) S_{\mathrm{UU}}^{(1)}(t\mid X_i)}{\pi_{\mathrm{UC}}(X_i) + \pi_{\mathrm{UU}}(X_i)}.
\end{equation}
Similarly, under control, the set of uncured patients is the union of the CU and UU strata, so that the probability of being cured under control is 
\begin{equation}
\label{eq:peq2}
    p^{(0)}(X_i)= \pi_{\mathrm{CU}}(X_i) + \pi_{\mathrm{UU}}(X_i) ,
\end{equation}
and the survival probability for uncured patients under control is 
\begin{equation}
\label{eq:Seq2}
    S_{\rm U}^{(0)}(t\mid X_i)= 
    \frac{\pi_{\mathrm{CU}} (X_i)S_{\mathrm{CU}}^{(0)}(t\mid X_i) + \pi_{\mathrm{UU}}(X_i) S_{\mathrm{UU}}^{(0)}(t\mid X_i)}{\pi_{\mathrm{CU}}(X_i) + \pi_{\mathrm{UU}}(X_i)}.
\end{equation}
These equalities are proven in Appendix~\ref{app:proofs}.

We are now in a position to derive the causal cure model likelihood, which will serve as the base for our Bayesian principal stratification inference. We let the stratum probability $\pi_g(X_i)$ depend on a vector of parameters $\theta_G$, and write $\pi_g(X_i \mid \theta_G)$ for the parameterization. For a time $t>0$, let $f_g^{(z)}(t \mid X_i)$ be the probability density function associated to the counterfactual survival function in stratum $g$. Note that this function is well-defined only for the principal strata UC and UU under treatment $z=1$, and CU and UU under treatment $z=0$. We let the functions $f_g^{(z)}$ and $S_g^{(z)}$ depend on a vector of parameters $\theta_g^{(z)}$, writing $S_g^{(z)}(t \mid X_i, \theta_g^{(z)})$ and $f_g^{(z)}(t \mid X_i, \theta_g^{(z)})$. Finally, let $\theta=\{\theta_G,\theta_{\mathrm{UC}}^{(1)},\theta_{\mathrm{UU}}^{(1)},\theta_{\mathrm{CU}}^{(0)},\theta_{\mathrm{UU}}^{(0)}\}$ and $O_n = (Z_i,X_i,Y_i,\Delta_i)_{i=1}^n$. 

Based on Theorem~\ref{thm:likelihood-standard} and Equations~\eqref{eq:peq1}--\eqref{eq:Seq2}, we can get an explicit formulation of the causal cure model likelihood $\cL_n(\theta\mid O_n)$:
{\allowdisplaybreaks
\begin{align*}
    \mathcal{L}_n(\theta \mid O_n) \propto 
    &\prod_{i: Z_i=1} 
    \big[ \pi_{\mathrm{CC}}(X_i\mid\theta_G) + \pi_{\mathrm{CU}}(X_i\mid\theta_G)  \\
    &\quad +\pi_{\mathrm{UC}}(X_i\mid\theta_G) S_{\mathrm{UC}}^{(1)}(Y_i \mid X_i,\theta_{\mathrm{UC}}^{(1)}) + \pi_{\mathrm{UU}}(X_i\mid\theta_G) S_{\mathrm{UU}}^{(1)}(Y_i \mid X_i,\theta_{\mathrm{UU}}^{(1)}) \big] ^{(1-\Delta_i)} \\ 
    &\quad \times \big[ \pi_{\mathrm{UC}}(X_i\mid\theta_G) f_{\mathrm{UC}}^{(1)}(Y_i \mid X_i,\theta_{\mathrm{UC}}^{(1)}) + \pi_{\mathrm{UU}}(X_i\mid\theta_G) f_{\mathrm{UU}}^{(1)}(Y_i \mid X_i,\theta_{\mathrm{UU}}^{(1)}) \big] ^{\Delta_i} \\
    &\times \prod_{i: Z_i=0} \big[ \pi_{\mathrm{CC}}(X_i\mid\theta_G) + \pi_{\mathrm{UC}}(X_i\mid\theta_G) \\
    &\quad +\pi_{\mathrm{CU}}(X_i\mid\theta_G) S_{\mathrm{CU}}^{(0)}(Y_i \mid X_i,\theta_{\mathrm{CU}}^{(0)}) + \pi_{\mathrm{UU}}(X_i\mid\theta_G) S_{\mathrm{UU}}^{(0)}(Y_i \mid X_i,\theta_{\mathrm{UU}}^{(0)}) \big] ^{(1-\Delta_i)} \\ 
    &\quad \times \big[ \pi_{\mathrm{CU}}(X_i\mid\theta_G) f_{\mathrm{CU}}^{(0)}(Y_i \mid X_i,\theta_{\mathrm{CU}}^{(0)}) + \pi_{\mathrm{UU}}(X_i\mid\theta_G) f_{\mathrm{UU}}^{(0)}(Y_i \mid X_i,\theta_{\mathrm{UU}}^{(0)}) \big] ^{\Delta_i} .
\end{align*} 
}
This likelihood is induced by Table~\ref{tab:profiles} and can be seen as a decomposition of the likelihood in Theorem~\ref{thm:likelihood-standard}. Its form implies that we must specify two models: one for the principal stratum given covariates, and one for the potential outcomes given the principal stratum and covariates.

The posterior distribution we aim to draw from is $\Pi_n(\theta \mid O_n) \propto \mathcal{L}_n(\theta \mid O_n) \cdot \Pi(\theta)$.
After specifying suitable prior distributions for each parameter in $\theta$, posterior sampling can be carried out in several ways. For ease of implementation and reproducibility, we rely, in the next sections, on the Stan programming language, which avoids analytical derivations and performs posterior sampling using Hamiltonian Monte Carlo (HMC).

Once posterior inference on the parameters in $\theta$ is derived, the posterior distribution of the estimands can be obtained by exploiting the analytical relationship between the quantities of interest and the model parameters. For the estimand in Equation~\eqref{eq:delta}, we have that $\bar\pi_g = \mathbb{E}[\pi_g(X_i)]$.
The estimands in Equations~\eqref{eq:tau}--\eqref{eq:rmst} rely on the marginal stratum-specific survival functions under the two treatment statuses. As noted in \citet{liu2024principal}, we have that 
\begin{equation}
\label{eq:Sg}
    S_g^{(z)}(t) = \mathbb{E} \left[ \frac{\pi_g(X_i)}{\bar\pi_g} S_g^{(z)}(t\mid X_i) \right]
\end{equation}
for $z\in\{0,1\}$ and $g \in \cG$. 
In general, for a given subset $K \subset \cG$, we can show that
\begin{equation}
\label{eq:Sunion}
S_{K}^{(z)}(t) = \mathbb{E}\left[ \frac{\sum_{g\in K} \pi_g(X_i) S_g^{(z)}(t \mid X_i)}{\sum_{g\in K} \bar\pi_g} \right] .
\end{equation}
We note that $S_g^{(z)}(t \mid X_i) = 1$ for $z=1$, $g\in\{\mathrm{CC},\mathrm{CU}\}$ and for $z=0$, $g\in\{\mathrm{CC},\mathrm{UC}\}$, for all $t>0$. 
Equations~\eqref{eq:Sg} and \eqref{eq:Sunion} are proven in Appendix~\ref{app:proofs}. 
At each MCMC iteration, we use these relationships to marginalize over the covariates by replacing each parameter-dependent quantity with its current posterior draw and substituting expected values with empirical averages.

\paragraph{Introducing auxiliary information}
Sometimes, additional information on the patient's cure status may be available. In our motivating study, a patient who is discharged from the hospital without experiencing the outcome of interest can be considered cured. Other examples include patients declared disease-free after a clinical assessment, or the absence of disease recurrence after a sufficiently long observation period.  
In a standard mixture cure model, as the one in Theorem~\ref{thm:likelihood-standard}, all censored patients have an uknown cure status, and therefore contribute to the likelihood with a mixture between the cured and uncured survival function. This includes censored patients who are discharged from the hospital or, in general, who are known to be cured. We introduce a binary indicator $D_i \in \{0, 1\}$ for censored patients, which we set equal to 1 if they are clinically identified as cured, and 0 otherwise.
Consequently, patients now fall into three mutually exclusive categories:
(i) $(\Delta_i=0,D_i=0)$ when censored and the cure status remains unknown (e.g., administrative censoring, loss to followup);
(ii) $(\Delta_i=0,D_i=1)$ when censored and identified as cured (e.g., discharged from the hospital);
(iii) $\Delta_i=1$ when the outcome is observed during followup, thus the patient is uncured. 
This leads to the next proposition. 

\begin{proposition}
\label{prp:likelihood-standard-cure}
    Under Assumptions~\ref{ass:sutva}, \ref{ass:ignorability}, \ref{ass:cond-indep-cens}, the likelihood function of Theorem~\ref{thm:likelihood-standard} modifies as follows to include information on the cure status for censored patients: 
    \begin{align*}
        \prod_{z\in\{0,1\}} \prod_{i:Z_i=z} 
        &\left[ 1-p^{(z)}(X_i) + p^{(z)}(X_i) S^{(z)}_{\mathrm{U}}(Y_i \mid X_i) \right] ^{(1-\Delta_i)(1-D_i)} \\
        &\quad \times  
        \left[ 1-p^{(z)}(X_i) \right] ^{(1-\Delta_i)D_i}
        \left[ p^{(z)}(X_i) f^{(z)}_{\mathrm{U}}(Y_i \mid X_i) \right] ^{\Delta_i} .
    \end{align*}
\end{proposition}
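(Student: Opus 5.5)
The plan is to recompute each individual's likelihood contribution, following the proof of Theorem~\ref{thm:likelihood-standard}. The only change is that the observed data are now $O_i = (Z_i, X_i, Y_i, \Delta_i, D_i)$, so we condition on an extra piece of information. As there, we factor the joint law of the observation into $p(X_i)\,p(Z_i\mid X_i)\,p(Y_i,\Delta_i,D_i\mid X_i,Z_i)$. The first two factors do not involve $p^{(z)}$ or $S^{(z)}_{\mathrm U}$ and are absorbed into the proportionality constant.

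\textbf{Step 1: reduce to potential outcomes.} On $\{Z_i=z\}$, Assumption~\ref{ass:sutva} lets us replace $(T_i,C_i,B_i)$ by $(T_i(z),C_i(z),B_i(z))$. Assumption~\ref{ass:ignorability} then lets us drop the conditioning on $Z_i=z$ in any event involving only potential outcomes given $X_i$. This step is identical to the corresponding step in the proof of Theorem~\ref{thm:likelihood-standard}.

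\textbf{Step 2: formalize $D_i$.} The key modelling step is to make precise what $D_i$ means, namely that $D_i$ reveals the latent cure status. We take $D_i=1$ to mean that the patient is censored and $B_i=0$ is known. We take $D_i=0$ with $\Delta_i=0$ to mean that the patient is censored and the cure status is unobserved.

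For this to leave the unknown-status contributions unchanged, the mechanism producing $D_i$ (discharge versus administrative censoring) must be non-informative. It may depend on $B_i$ only through the fact that it can fire only when $B_i=0$, and its probabilities must not involve the parameters. This is the delicate step, and it is where I expect the main obstacle. I would state it explicitly as an extension of Assumption~\ref{ass:cond-indep-cens}: the pair $(C_i(z), D_i)$ is conditionally independent of $T_i(z)$ given $X_i$, $Z_i$ and the event that the status is revealed. Under this assumption the factors for the revelation mechanism do not involve $(p^{(z)}, S_{\mathrm U}^{(z)})$ and can be absorbed into the constant, exactly as censoring densities are absorbed in Theorem~\ref{thm:likelihood-standard}.

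\textbf{Step 3: compute the three contributions.}
\begin{itemize}
\item \emph{Case (iii), $\Delta_i=1$.} The contribution is the density of $T_i(z)$ at $Y_i$, restricted to $B_i(z)=1$. This gives $p^{(z)}(X_i)\,f_{\mathrm U}^{(z)}(Y_i\mid X_i)$ times a censoring survival factor.
\item \emph{Case (i), $\Delta_i=0$, $D_i=0$.} The contribution is $\bbP(T_i(z)>Y_i\mid X_i)$ times the censoring density. By the mixture decomposition~\eqref{eq:mixture} with $S_{\mathrm U}^{(z)}$, this equals $1-p^{(z)}(X_i)+p^{(z)}(X_i)S^{(z)}_{\mathrm U}(Y_i\mid X_i)$.
\item \emph{Case (ii), $\Delta_i=0$, $D_i=1$.} The event observed is $\{C_i(z)=Y_i,\ B_i(z)=0\}$, since $T_i(z)=\infty>Y_i$ automatically. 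Its probability factors as $\bbP(B_i(z)=0\mid X_i)=1-p^{(z)}(X_i)$ times a censoring or revelation factor free of the parameters.
\end{itemize}

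\textbf{Step 4: assemble the likelihood.} We multiply over $i$ and group the contributions by $Z_i=z$. Writing each case with the exponents $(1-\Delta_i)(1-D_i)$, $(1-\Delta_i)D_i$ and $\Delta_i$ yields the stated product. As a consistency check, I would note that summing cases (i) and (ii) over $D_i$ recovers the censored factor of Theorem~\ref{thm:likelihood-standard}, up to the revelation probabilities.
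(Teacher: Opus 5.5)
Your route is the one the paper takes in Appendix~\ref{app:proofs}: discard $\bbP(Z_i\mid X_i)\bbP(X_i)$, treat the three observation types separately, split each by $B_i$, and pass to potential outcomes with Assumptions~\ref{ass:sutva} and~\ref{ass:ignorability}. You are right that everything hinges on Step~2, but case (i) of Step~3 does not follow from it. Make your mechanism explicit. Fix $z$ and $X_i=x$, and write $p=p^{(z)}(x)$ and $S_{\mathrm U}=S^{(z)}_{\mathrm U}(\cdot\mid x)$. Let the censoring time have density $g$, independent of $T_i(z)$. Let a censored cured patient be revealed with a parameter-free probability $\eta(y)$, and let uncured patients never be revealed (your ``it can fire only when $B_i=0$''). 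Case (ii) then has density $g(y)\eta(y)(1-p)$, as you say, but case (i) has density
\[
g(y)\big[(1-p)(1-\eta(y))+p\,S_{\mathrm U}(y)\big],
\]
not $g(y)\big[1-p+p\,S_{\mathrm U}(y)\big]$. The factor $1-\eta(y)$ multiplies only the cured component, so it cannot be absorbed into the proportionality constant. A mechanism that can only fire when $B_i(z)=0$ makes its \emph{failure} to fire informative about $B_i(z)$.

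Your conditional-independence condition does not rule this out. Conditioning on the event that the status is revealed fixes $D_i=1$ and $T_i(z)=\infty$, so the condition holds trivially and says nothing about the unrevealed event, which is where the problem lies. Your own consistency check detects the error if done without the caveat. Cases (i) and (ii) must add up to $g(y)[1-p+pS_{\mathrm U}(y)]$, the full censored contribution behind Theorem~\ref{thm:likelihood-standard}. Your two contributions instead add up to $g(y)[(1-p)(1+\eta(y))+pS_{\mathrm U}(y)]$, so ``up to the revelation probabilities'' hides a double count. Likewise, if discharge occurs at a random time with survival function $\bar H$ among the cured and competes with administrative censoring, case (i) becomes $(1-p)\bar H(y)+pS_{\mathrm U}(y)$.

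The paper's proof does not supply the missing ingredient either. It asserts, via Assumption~\ref{ass:cond-indep-cens}, that $\bbP(Y_i,\Delta_i=0,D_i=0\mid X_i,Z_i)=\bbP(T_i>Y_i,C_i=Y_i\mid X_i,Z_i)$ and that $\bbP(Y_i,\Delta_i=0,D_i=1\mid X_i,Z_i)=\bbP(T_i>Y_i,C_i=Y_i,T_i=\infty\mid X_i,Z_i)$. The second event is contained in the first, which is the same double count. Moreover, when censoring occurs at discharge, which only cured patients experience, $C_i$ is not independent of $T_i$, so Assumption~\ref{ass:cond-indep-cens} cannot be what justifies this step.

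To obtain the displayed likelihood exactly, you must assume directly what case (i) needs. Given $X_i$, for every $y$ with $S_{\mathrm U}(y)>0$, the probability of being censored at $y$ and remaining unrevealed must be parameter-free and the same for cured and uncured patients. In addition, $C_i(z)\indep T_i(z)$ must hold given $X_i$ and $B_i(z)=1$. In both mechanisms above this forces $\eta(y)=0$, respectively $\bar H(y)=1$, wherever $S_{\mathrm U}(y)>0$. That means no revelation occurs before the uncured failure times are exhausted, where $D_i$ carries no information anyway. Otherwise the case (i) contribution keeps a nuisance factor on the cured term, which has to be modelled. The stated expression should then be regarded as a working likelihood rather than the exact one.
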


Using the latter proposition and Equations~\eqref{eq:peq1}--\eqref{eq:Seq2} allows again expressing the causal cure model likelihood with additional information on the cure status. Let $\theta=\{\theta_G,\theta_{\mathrm{UC}}^{(1)},\theta_{\mathrm{UU}}^{(1)},\theta_{\mathrm{CU}}^{(0)},\theta_{\mathrm{UU}}^{(0)}\}$ and $O_n^+ = (Z_i,X_i,Y_i,\Delta_i,D_i)_{i=1}^n$. Then:
{\allowdisplaybreaks
\begin{align*}
    \mathcal{L}_n^+(\theta \mid O_n^+) \propto 
    &\prod_{i: Z_i=1} 
    \big[ \pi_{\mathrm{CC}}(X_i\mid\theta_G) + \pi_{\mathrm{CU}}(X_i\mid\theta_G) \\
    &\quad +\pi_{\mathrm{UC}}(X_i\mid\theta_G) S_{\mathrm{UC}}^{(1)}(Y_i \mid X_i,\theta_{\mathrm{UC}}^{(1)}) + \pi_{\mathrm{UU}}(X_i\mid\theta_G) S_{\mathrm{UU}}^{(1)}(Y_i \mid X_i,\theta_{\mathrm{UU}}^{(1)}) \big] ^{(1-\Delta_i)(1-D_i)} \\
    &\quad \times \big[ \pi_{\mathrm{CC}}(X_i\mid\theta_G) + \pi_{\mathrm{CU}}(X_i\mid\theta_G) \big] ^{(1-\Delta_i)D_i} \\
    &\quad \times \big[ \pi_{\mathrm{UC}}(X_i\mid\theta_G) f_{\mathrm{UC}}^{(1)}(Y_i \mid X_i,\theta_{\mathrm{UC}}^{(1)}) + \pi_{\mathrm{UU}}(X_i\mid\theta_G) f_{\mathrm{UU}}^{(1)}(Y_i \mid X_i,\theta_{\mathrm{UU}}^{(1)}) \big] ^{\Delta_i} \\
    &\times \prod_{i: Z_i=0} \big[ \pi_{\mathrm{CC}}(X_i\mid\theta_G) + \pi_{\mathrm{UC}}(X_i\mid\theta_G) \\
    &\quad +\pi_{\mathrm{CU}}(X_i\mid\theta_G) S_{\mathrm{CU}}^{(0)}(Y_i \mid X_i,\theta_{\mathrm{CU}}^{(0)}) + \pi_{\mathrm{UU}}(X_i\mid\theta_G) S_{\mathrm{UU}}^{(0)}(Y_i \mid X_i,\theta_{\mathrm{UU}}^{(0)}) \big] ^{(1-\Delta_i)(1-D_i)} \\
    &\quad \times \big[ \pi_{\mathrm{CC}}(X_i\mid\theta_G) + \pi_{\mathrm{UC}}(X_i\mid\theta_G) \big] ^{(1-\Delta_i) D_i} \\
    &\quad \times \big[ \pi_{\mathrm{CU}}(X_i\mid\theta_G) f_{\mathrm{CU}}^{(0)}(Y_i \mid X_i,\theta_{\mathrm{CU}}^{(0)}) + \pi_{\mathrm{UU}}(X_i\mid\theta_G) f_{\mathrm{UU}}^{(0)}(Y_i \mid X_i,\theta_{\mathrm{UU}}^{(0)}) \big] ^{\Delta_i} 
    .
\end{align*} 
}

\section{Simulation study}

This simulation study aims at comparing the performance of our model with that of \citet{wang2024causal} under different conditions.

\subsection{Modeling and prior details}
\label{sec:models-priors-simulation}

We consider a data generating process (DGP) respecting Assumption~\ref{ass:cond-mono}, thus replicating the model specification in \citet{wang2024causal}. To ensure a fair comparison with the estimation strategy proposed in \citet{wang2024causal} and coherence with their DGP, we also model the uncured probabilities with a logistic regression. We let $\bX_i := (1,X_i)$ and $\theta_G := \{\alpha^{(z)}\}_{z\in\{0,1\}}$ be two vectors of the same size as $\bX_i$. Letting ${\rm expit}(a):=\exp(a)/(1+\exp(a))$, we consider the parametrization 
$p^{(z)}(X_i \mid \alpha^{(z)}) :={\rm expit}(\bX_i^\top\alpha^{(z)})$ for  $z\in\{0,1\}$.
We then fix a parameter $\rho \in (-1,1)$ and derive the principal strata probabilities as: 
{\allowdisplaybreaks
\begin{align*}
\text{(i)}~&\pi_{\mathrm{UU}}(X_i\mid \theta_G) = \rho \min(p^{(1)}(X_i \mid \alpha^{(1)}), p^{(0)}(X_i \mid \alpha^{(0)})) + (1-\rho) p^{(1)}(X_i \mid \alpha^{(1)}) p^{(0)}(X_i \mid \alpha^{(0)}); \\
\text{(ii)}~&\pi_{\mathrm{UC}}(X_i\mid \theta_G) = p^{(1)}(X_i \mid \alpha^{(1)}) - \pi_{\mathrm{UU}}(X_i\mid \theta_G);\\
\text{(iii)}~&\pi_{\mathrm{CU}}(X_i\mid \theta_G) = p^{(0)}(X_i \mid \alpha^{(0)}) - \pi_{\mathrm{UU}}(X_i\mid \theta_G);\\
\text{(iv)}~&\pi_{\mathrm{CC}}(X_i\mid \theta_G) = 1 - \pi_{\mathrm{CU}}(X_i\mid \theta_G) - \pi_{\mathrm{UC}}(X_i\mid \theta_G) - \pi_{\mathrm{UU}}(X_i\mid \theta_G). 
\end{align*}
}
As discussed in \citet{wang2024causal}, $\rho$ represents the correlation between $B_i(0)$ and $B_i(1)$, and it holds that $\rho=1$ under Assumption~\ref{ass:cond-mono}. Note that this parameterization is chosen for the purposes of this simulation study, although alternative specifications might be implemented (see Section~\ref{sec:nivas-models}). Moreover, while $\rho$ is fixed here, the Bayesian paradigm allows it to be readily included in the parameters vector $\theta_G$, enabling posterior inference and facilitating sensitivity analysis.

Many different models can be considered for the potential survival outcomes, ranging from fully parametric to semi-parametric ones. We choose a piecewise constant hazard model, sometimes referred to as the piecewise exponential model. We partition the time axis in $J$ intervals $(0,s_1], (s_1,s_2], \dots, (s_{J-1},s_J]$, with $0<s_1<\dots<s_J$ and $s_J$ greater than the maximum failure or censoring time observed for the outcome, so that $s_J>Y_i$ for all $i\in[n]$. In each interval $j$, for $z\in\{0,1\}$ and each applicable $g \in \neg {\rm CC}$, we assume a constant baseline hazard $\lambda_{gj}^{(z)}$, and we model the hazard function as 
\begin{equation*}
    h_g^{(z)}(t \mid X_i, \theta_g^{(z)}) = \lambda_{gj}^{(z)} \exp(X_i^\top \gamma_g^{(z)}) \quad \text{for } t\in(s_{j-1},s_j],
\end{equation*}
where $\theta_g^{(z)} := (\gamma_g^{(z)},\lambda_{gj}^{(z)})_{j \in [J], g\in\neg{\mathrm{CC}}, z\in\{0,1\}}$. The piecewise exponential model is quite general and can accommodate various shapes of the hazard function. Time-specific covariate effects may also be introduced by letting the parameters $\gamma_g^{(z)}$ vary across the different time intervals. Moreover, when the time axis is split at each observed failure time, the piecewise exponential model is equivalent to a Cox model \citep{collett2023modelling}. On the other hand, when $J=1$, it reduces to a parametric exponential model. 
For this simulation study and for the application in Section~\ref{sec:nivas}, we partition the time axis in three intervals: $(0,7]$, $(7,14]$, $(14,\infty)$. 

We use standard Normal priors for the logarithm of the baseline hazards $\log(\lambda_{gj}^{(z)})$ and for all the other parameters involved in the models, assuming that all the parameters are a priori independent.

\subsection{Data generation and scenarios}

We evaluate the methodologies across four scenarios. Scenarios 1 and 2 consider a baseline setting where Assumptions~\ref{ass:sutva}--\ref{ass:cond-mono} and \ref{ass:vw} hold, exploring both a large sample size ($n=2000$, Scenario 1) and a small sample size ($n=100$, Scenario 2). We then benchmark the robustness of the two methodologies against violations of either Assumption~\ref{ass:vw} (Scenario 3) or Assumption~\ref{ass:cond-mono} (Scenario 4), both under a large sample size ($n=2000$). 

Following \citet{wang2024causal}, we independently generate three covariates $X_i := (U_i,V_i,W_i)$ from a Bernoulli(0.5), and assign the treatment $Z_i$ with probability $\bbP(Z_i=1 \mid X_i) = \mathrm{expit}(-1+U_i/2+V_i/2-W_i/2)$. 
The uncured probabilities are generated as $p^{(1)}(U_i,V_i) = \mathrm{expit}(U_i/4-V_i/2)$ and $p^{(0)}(U_i,V_i) = \mathrm{expit}(-1/2-U_i/3+V_i)$.
From these, we obtain the principal strata probabilities $\pi_{g}(U_i,V_i)$ as described in Section~\ref{sec:models-priors-simulation}, where we set $\rho=1$ in scenarios where Assumption~\ref{ass:cond-mono} holds, and $\rho=0.5$ to assess the behavior under violation. The principal stratum is then drawn as $G_i \sim \mathrm{Multinomial}(\pi_{g}(U_i,V_i))$ for $g \in \mathcal{G}$. 
Assumption~\ref{ass:vw} is satisfied when the covariate $V_i$ influences the potential uncure status $B_i(z)$ but not the potential time-to-event outcome $T_i(z)$. We therefore simulate $T_i(z)$, as long as it exists, from an exponential distribution with rate $\exp(\bX_i^\top \gamma_g^{(z)})/2$, where $\gamma_{\mathrm{UU}}^{(1)}=(0,-1,0,1)$, $\gamma_{\mathrm{UC}}^{(1)}=(0.5,-0.5,0,0.5)$, $\gamma_{\mathrm{UU}}^{(0)}=(1,1,0,0)$, $\gamma_{\mathrm{CU}}^{(0)}=(0.5,0.5,0,0.5)$ and where we recall that $\bX_i = (1,U_i,V_i,W_i)$. 
To generate data under violation of Assumption~\ref{ass:vw}, we consider instead the following parameters $\gamma_{\mathrm{UU}}^{(1)}=(0,-1,0.5,1)$, $\gamma_{\mathrm{UC}}^{(1)}=(0.5,-0.5,-0.2,0.5)$, $\gamma_{\mathrm{UU}}^{(0)}=(1,1,0.2,0)^\top$, $\gamma_{\mathrm{CU}}^{(0)}=(0.5,0.5,-0.2,0.5)$. 
Finally, the censoring times are generated as $C_i(1) \sim \text{U}(6,30)$ and $C_i(0) \sim \text{U}(5,30)$, exactly as was done in \cite{wang2024causal}.

To replicate the method of \citet{wang2024causal}, we use the code provided by the authors. For our proposed method, we follow the model specifications described in Section~\ref{sec:models-priors-simulation}, where all the covariates $(U_i,V_i,W_i)$ are included in the models. We always set $\rho=1$ for estimation with both methods, regardless of the scenario. 

\subsection{Results}

Figure~\ref{fig:simres} presents the simulation results, comparing the point estimates of the two methods across 100 simulated datasets for the estimands $\overline{\delta}$ and $\tau_{\mathrm{UU}}^{\mathrm{RMST}}(t^*=30)$, as well as the marginal strata probabilities $\overline{\pi}_{g}$, under the four scenarios considered. Throughout, we use the posterior median as the point estimate for our Bayesian model.
In the baseline scenario with large sample size (Scenario 1, Figure~\ref{fig:simres-a}), where Assumptions~\ref{ass:sutva}--\ref{ass:cond-mono} and \ref{ass:vw} all hold, both methods achieve comparable performance, confirming that our approach is competitive under ideal conditions. When the sample size is small (Scenario 2, Figure~\ref{fig:simres-b}), our method generally yields lower variability at similar bias levels, reflecting the benefits of the Bayesian paradigm.
The key advantage of our framework emerges in Scenario 3 (Figure~\ref{fig:simres-c}), where Assumption~\ref{ass:vw} is violated. Unlike the method of \citet{wang2024causal}, which crucially relies on decompositions based on the substitutional variable for identification, our approach does not, and consequently proves more robust to violations of this assumption. This is a key practical advantage, as the substitution relevance assumption can be difficult to verify. Moreover, our method does not require the knowledge of the substitutional variable, which is challenging to select in applied settings: all covariates are included in the model without distinction. 
Finally, Scenario 4 (Figure~\ref{fig:simres-d}) shows that neither method is particularly robust to violations of Assumption~\ref{ass:cond-mono}. However, we discuss how different parameterizations of our model allow to relax this assumption, providing an easy way to assess sensitivity to it. 
Additional details on the simulation study can be found in Appendix~\ref{app:simulations}.

\begin{figure}[h]
  \centering
  \begin{subfigure}[b]{0.49\textwidth}
    \includegraphics[width=\linewidth]{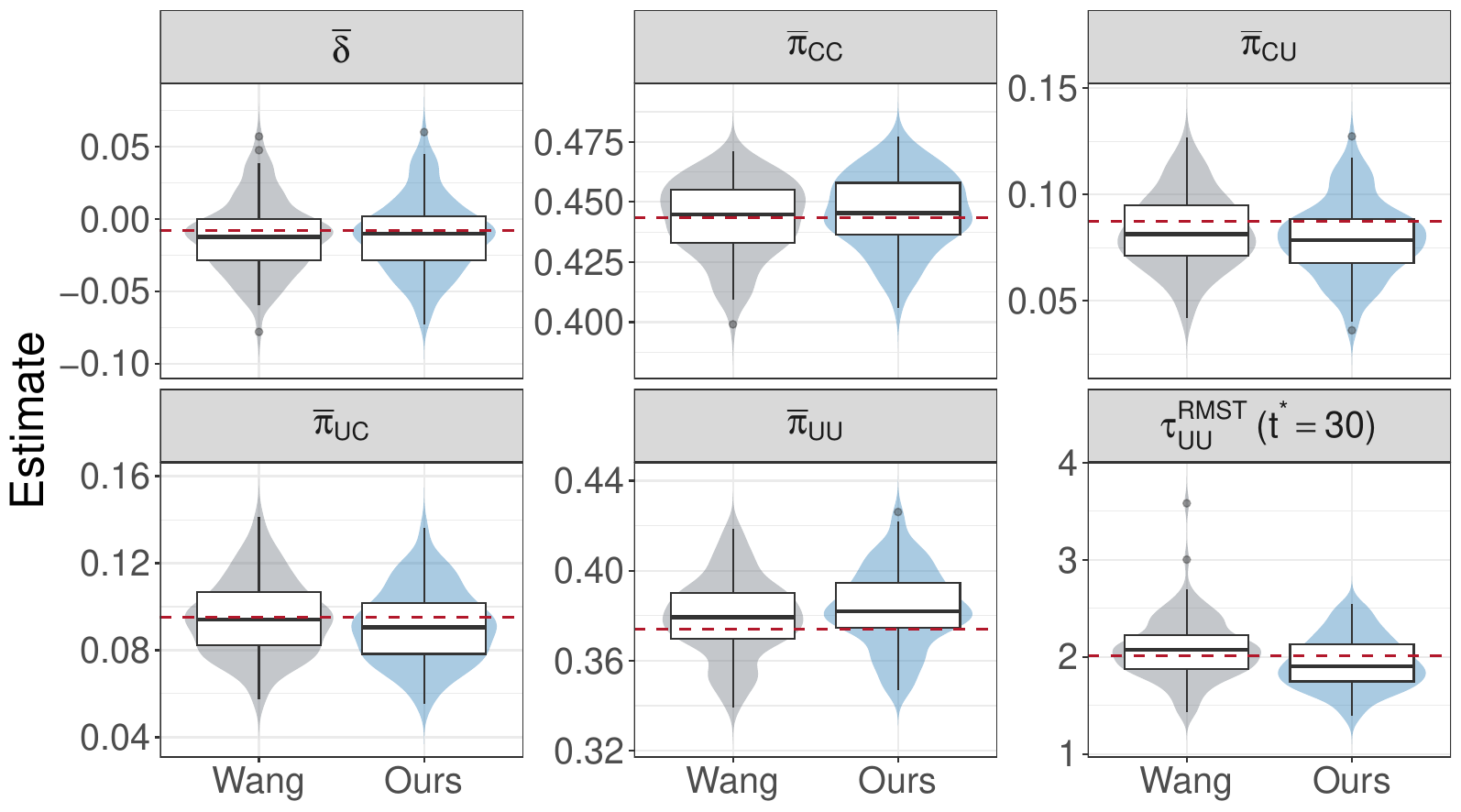}
    \caption{} \label{fig:simres-a}
  \end{subfigure}
  \hfill
  \begin{subfigure}[b]{0.49\textwidth}
    \includegraphics[width=\linewidth]{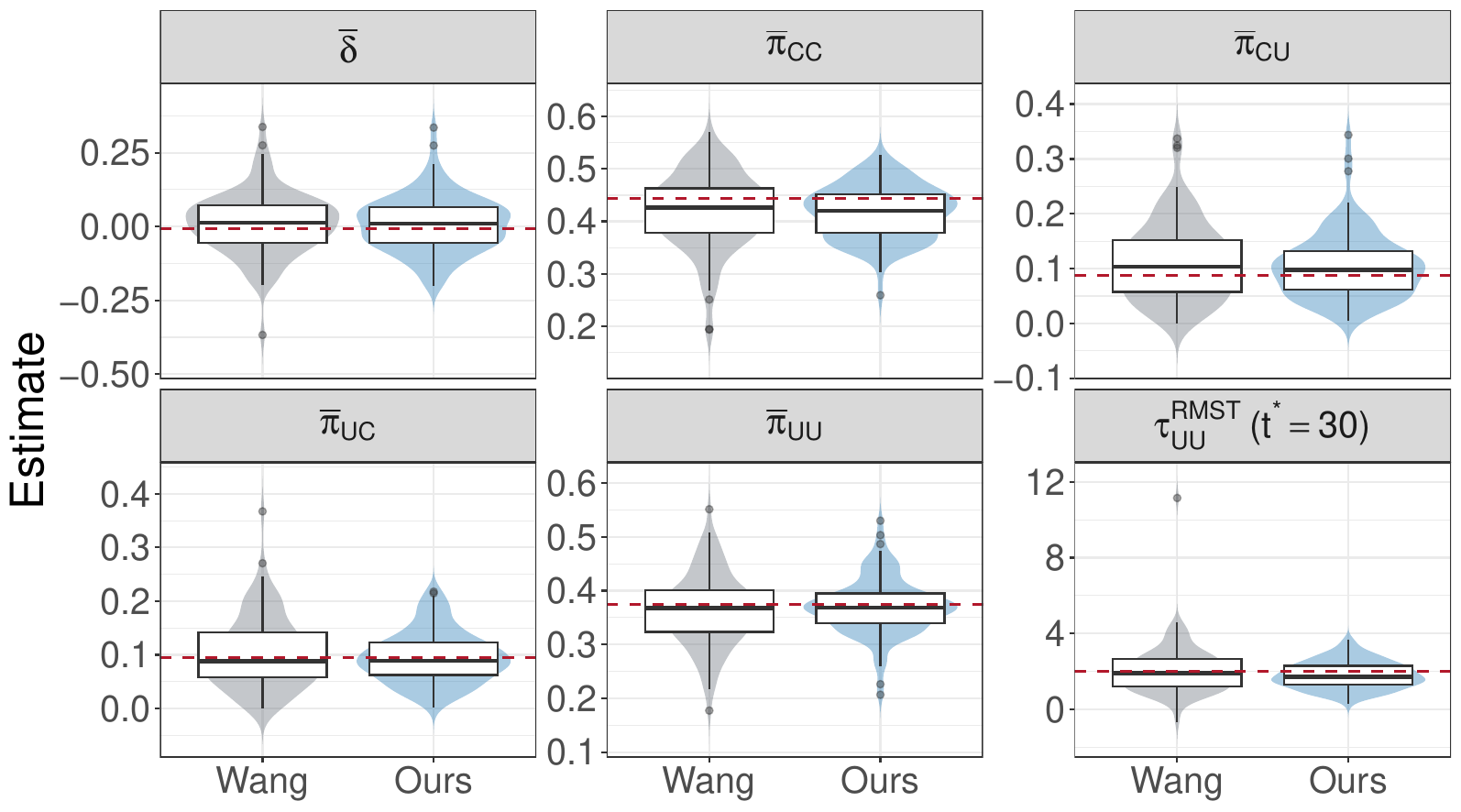}
    \caption{} \label{fig:simres-b}
  \end{subfigure} 
  \hfill
  \begin{subfigure}[b]{0.49\textwidth}
    \includegraphics[width=\linewidth]{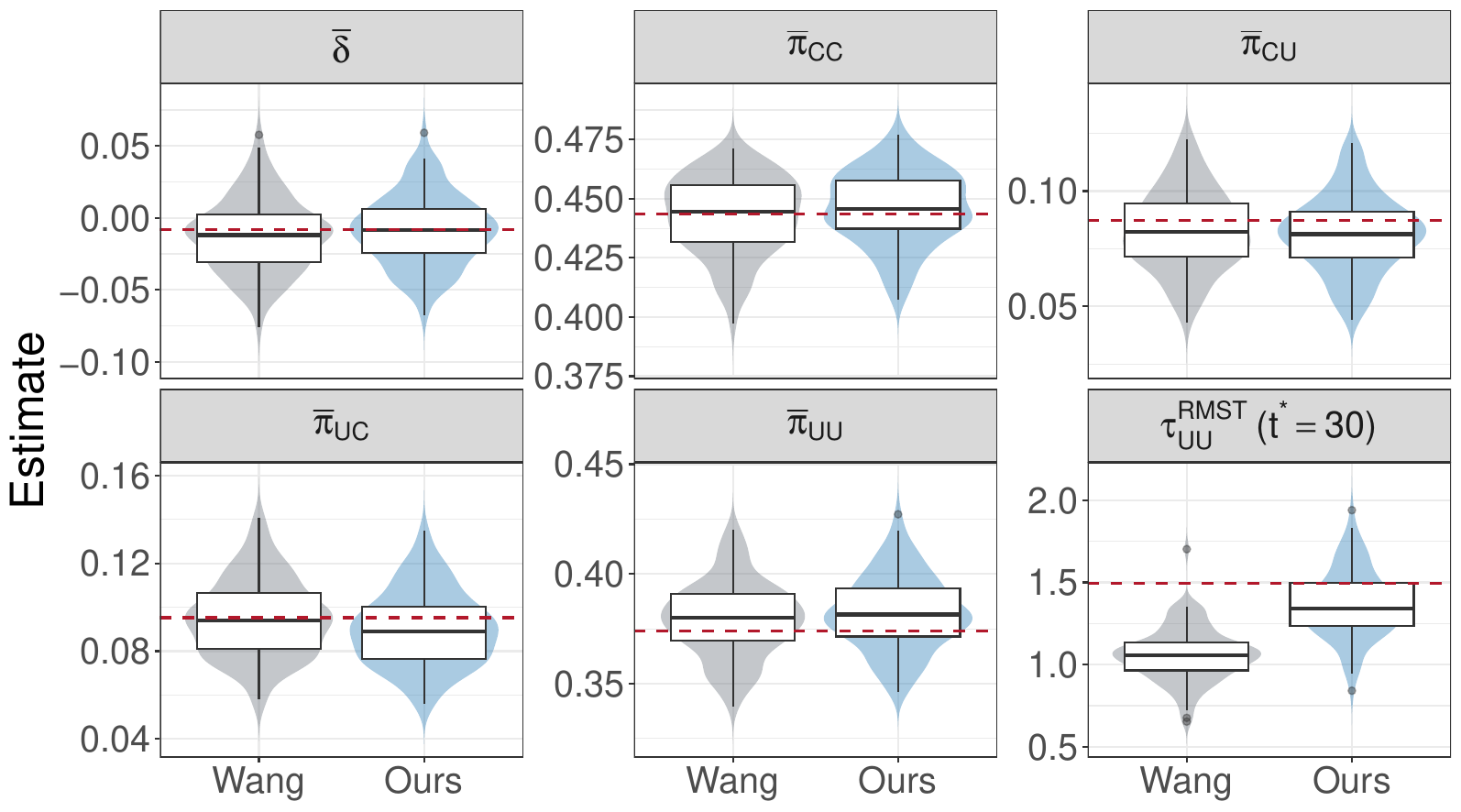}
    \caption{} \label{fig:simres-c}
  \end{subfigure}
  \hfill
  \begin{subfigure}[b]{0.49\textwidth}
    \includegraphics[width=\linewidth]{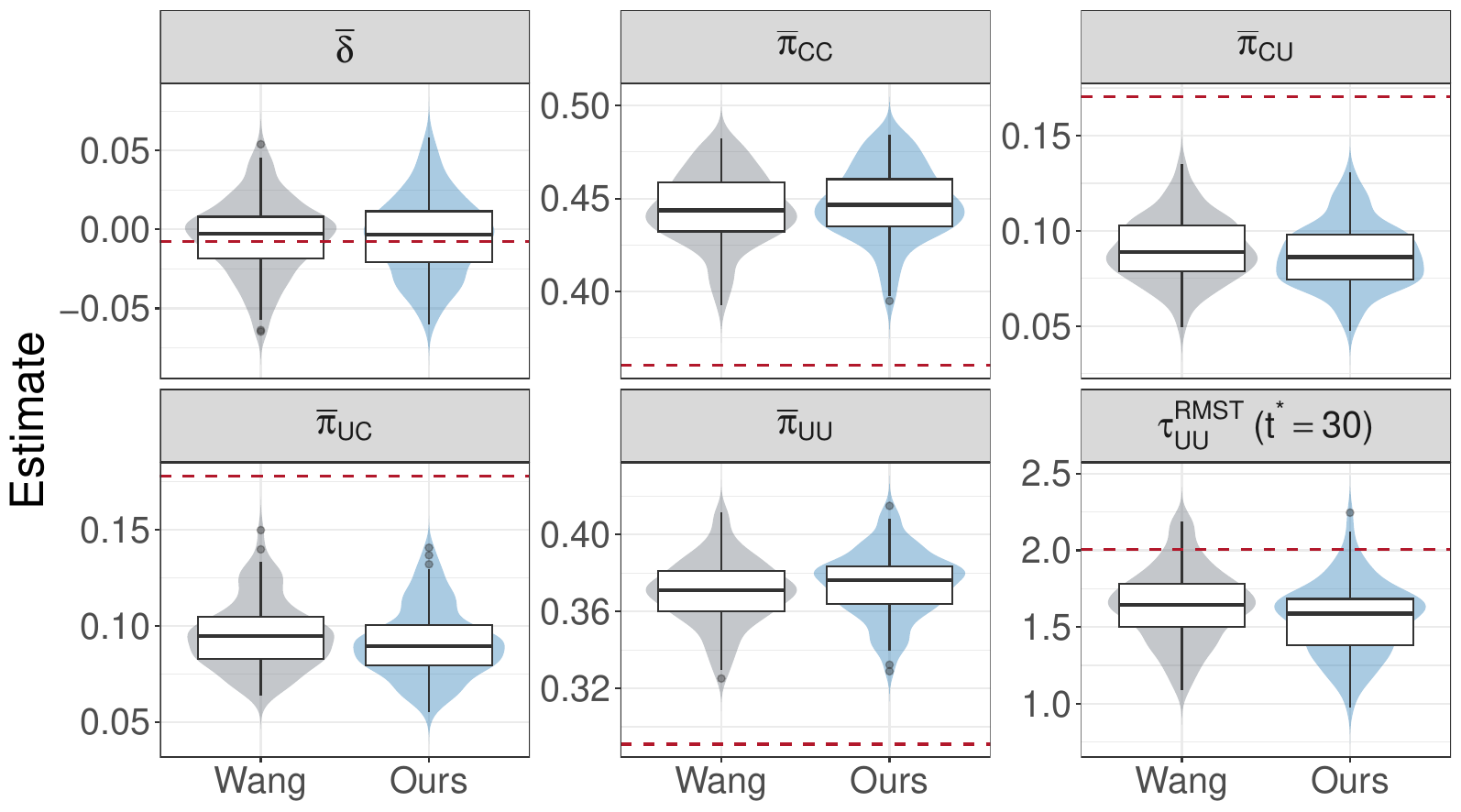}
    \caption{} \label{fig:simres-d}
  \end{subfigure}
  \caption{Simulation results for the four scenarios considered. Boxplots and violin plots represent the distribution of the point estimate of the estimands $\overline{\delta}$ and $\tau_{\mathrm{UU}}^{\mathrm{RMST}}(t^*=30)$ and the marginal strata probabilities $\overline{\pi}_{g}$ across 100 replications for the two methods. In our Bayesian model, we take the posterior median as the point estimate. The horizontal dashed line indicates the true value. Panel (a): Scenario 1 (Assumptions~\ref{ass:sutva}--\ref{ass:cond-mono} and \ref{ass:vw} hold, $n=2000$). Panel (b): Scenario 2 (Assumptions~\ref{ass:sutva}--\ref{ass:cond-mono} and \ref{ass:vw} hold, $n=100$). Panel (c): Scenario 3 (Assumption~\ref{ass:vw} violated, $n=2000$). Panel (d): Scenario 4 (Assumption~\ref{ass:cond-mono} violated, $n=2000$).}
  \label{fig:simres}
\end{figure}

\section{Application to NIVAS} \label{sec:nivas}

The post-operative period following abdominal surgery is associated with an increased risk of partial lung collapse (atelectasis), which may lead to hypoxemia and acute respiratory failure. In patients experiencing severe respiratory deterioration, tracheal reintubation may become necessary to provide invasive mechanical ventilation and prevent further clinical decline. However, the need for reintubation is widely recognized as a marker of poor prognosis and is associated with increased morbidity, excess mortality, prolonged lengths of stay in both the intensive care unit and the hospital, and higher healthcare utilization. 

We analyze data from the NIVAS (Non-Invasive Ventilation After Surgery) study \citep{jaber2016nivas}, a randomized controlled trial aimed at assessing whether non-invasive ventilation (NIV) can prevent reintubation in patients with hypoxemic acute respiratory failure following abdominal surgery. The study includes $n=293$ patients, of which 148 are randomly assigned NIV and 145 are assigned standard oxygen therapy. 

We consider time to first reintubation or death, whichever comes first, as our outcome of interest. From a clinical perspective, reintubation typically occurs shortly after surgery among patients who experience respiratory deterioration, implying a relatively short follow-up horizon for the endpoint under study. Consequently, many patients remain event-free throughout the observation period and are not expected to require reintubation or die within this clinically relevant time window. This is also confirmed by Kaplan-Meier curves (Figure~\ref{fig:nivas_km}), showing a plateau in their right tail. Overall, these considerations justify a comparison based on cure models. 

\begin{figure}
    \centering
    \includegraphics[width=0.7\linewidth]{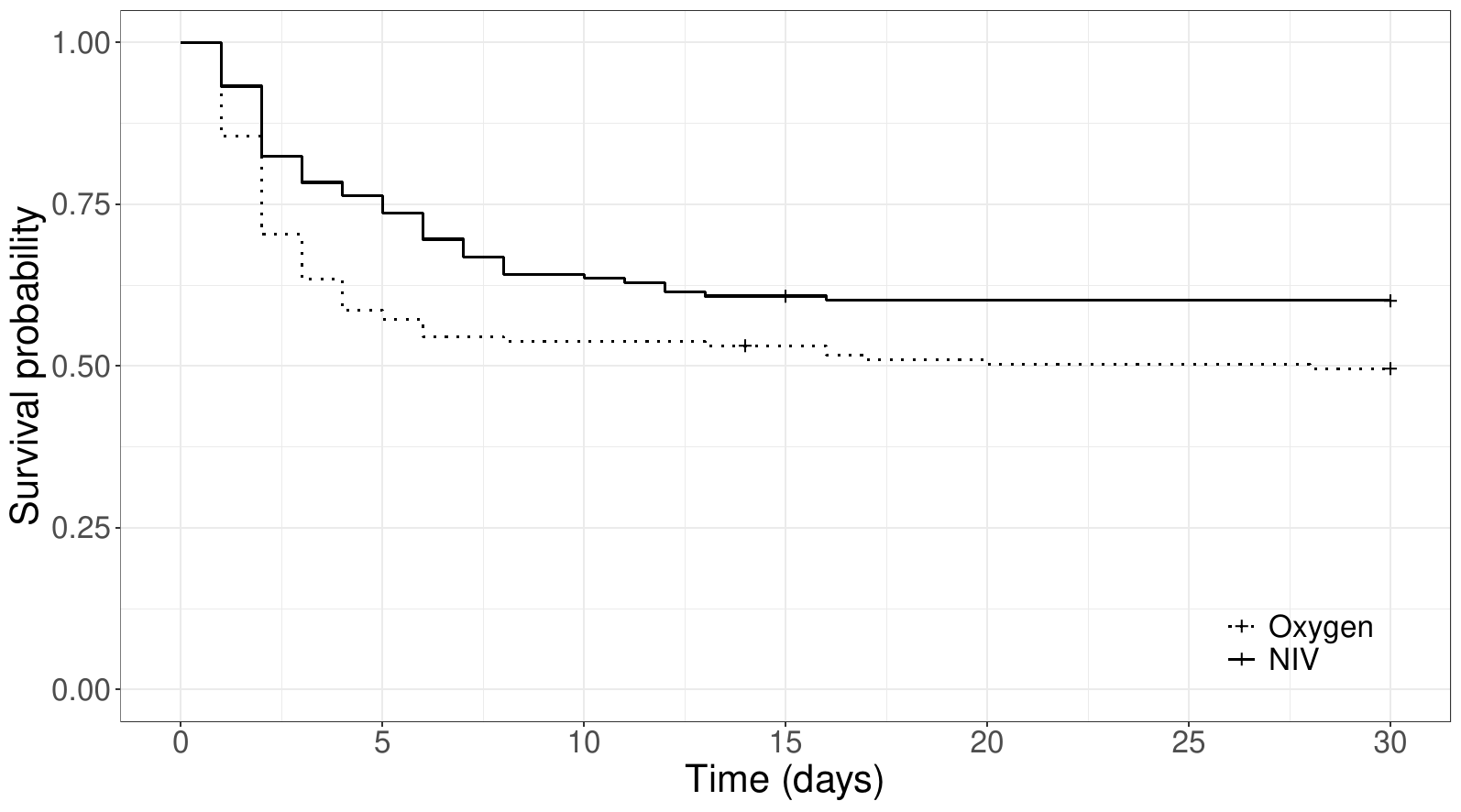}
    \caption{Kaplan-Meier curves by treatment group in the NIVAS study.}
    \label{fig:nivas_km}
\end{figure}

The goal of this analysis is threefold. First, we want to understand whether NIV can prevent patients from experiencing reintubation or death compared to standard oxygen therapy. Second, we aim at comparing survival probabilities among relevant subpopulations of patients, that is, UU and $\neg\mathrm{CC}$ patients. 
Third, we provide a characterization of principal strata based on the covariates, to inform about the observable characteristics linked to each latent principal stratum. 

\subsection{Modeling and prior details} \label{sec:nivas-models}

Within this application, we model the principal stratum variable $G_i$ through a multinomial distribution
\begin{equation}
\label{eq:ps-model}
    \pi_g(X_i \mid \theta_G)= \frac{\exp(\bX_i^\top\beta_g)}{\sum_{k\in\cG}  \exp(\bX_i^\top \beta_k)} \quad \text{for } g\in\cG,
\end{equation}
where $\theta_G := (\beta_g)_{g\in\mathcal{G}}$. For identifiability, we set category ${\rm CC}$ as the reference, so that $\beta_{\mathrm{CC}}=0$. This is a different but more natural choice than the one presented in Section~\ref{sec:models-priors-simulation}, since it directly models the principal stratum and ensures that the category probabilities are always constrained to the probability simplex for each unit. Moreover, in general, it does not imply the monotonicity assumption. The outcome model and the priors remain the same as in Section~\ref{sec:models-priors-simulation}.

We operate under the assumption that there are no patients who would be uncured if treated with NIV but cured under oxygen therapy, i.e., $\pi_{\mathrm{UC}}(X_i)=0$ for all $i\in[n]$. As already discussed in Section~\ref{sec:identification}, this is a special case of Assumption~\ref{ass:cond-mono}, which we impose by removing the category $\mathrm{UC}$ from the set $\mathcal{G}$ in Equation~\eqref{eq:ps-model}. We deem this reasonable in our study, as NIV is not generally expected to be harmful relative to standard oxygen therapy. Nonetheless, we recognize that potential NIV intolerance could challenge this assumption, and we thus perform a sensitivity analysis in Appendix~\ref{app:nivas}, presenting results obtained without this constraint.

We also include additional information on the cure status for censored patients, therefore considering the causal cure model likelihood we derived based on Proposition~\ref{prp:likelihood-standard-cure}. In our study, censored patients are classified cured if they are discharged from the hospital. Accordingly, censoring is applied at hospital discharge, loss to follow-up, or 30 days, whichever occurs first. We show how this helps sharpen inference by providing the results of models that do not leverage this information in Appendix~\ref{app:nivas}. 

We considered 12 baseline covariates, selected by clinical experts based on their established or suspected association with the risk of reintubation or death, and handle missing values by multiple imputation. The covariates are listed in Table~\ref{tab:ps-characterization}. Summary statistics, along with information on the imputation procedure, are provided in Appendix~\ref{app:nivas}. 

The Bayesian model is implemented in Stan. For the estimation, we run 4 chains for 2000 iterations, discarding the first 1000 as warmup. The chains mix well and there are no signs of convergence issues. 

\subsection{Results}

Posterior distributions of the relevant quantities are shown in Figure~\ref{fig:nivas-post}. We find that NIV has a positive but limited ability to prevent patients from reintubation or death, as represented by the posterior median of $\overline{\delta}$ of $0.06$, with $95\%$ credible interval from $0.01$ to $0.13$. The posterior median of $\tau_{\mathrm{UU}}^{\mathrm{RMST}}(t^*=30)$ indicates that NIV delays reintubation or death by $1.20$ days on average over a 30 day period among always-uncured patients. Although the associated $95\%$ credible interval ranges from $-0.64$ to $3.02$, there is a $91\%$ posterior probability that $\tau_{\mathrm{UU}}^{\mathrm{RMST}}(t^*=30)>0$, suggesting that NIV is more likely than not to be beneficial for this stratum. The advantage of NIV compared to standard oxygen therapy is even more striking in the union of the $\mathrm{CU}$ and $\mathrm{UU}$ strata, the $\neg\mathrm{CC}$ subgroup, where treatment with NIV results in an average gain of $3.85$ days, with a $95\%$ credible interval from $1.14$ to $7.29$, as shown by the posterior distribution of $\tau_{\neg\mathrm{CC}}^{\mathrm{RMST}}(t^*=30)$. Figure~\ref{fig:nivas-tau} reports the posterior median and 95\% credible interval of $\tau_{\mathrm{UU}}(t)$ and $\tau_{\neg\mathrm{CC}}(t)$ over 30 days, showing the time dynamics of the difference in survival probabilities between the two treatment groups in the relevant subpopulations. In both cases, the most benefit of NIV is observed within 7 days following randomization. 

\begin{figure}[h]
  \centering
  \begin{subfigure}[b]{0.9\textwidth}
    \includegraphics[width=\linewidth]{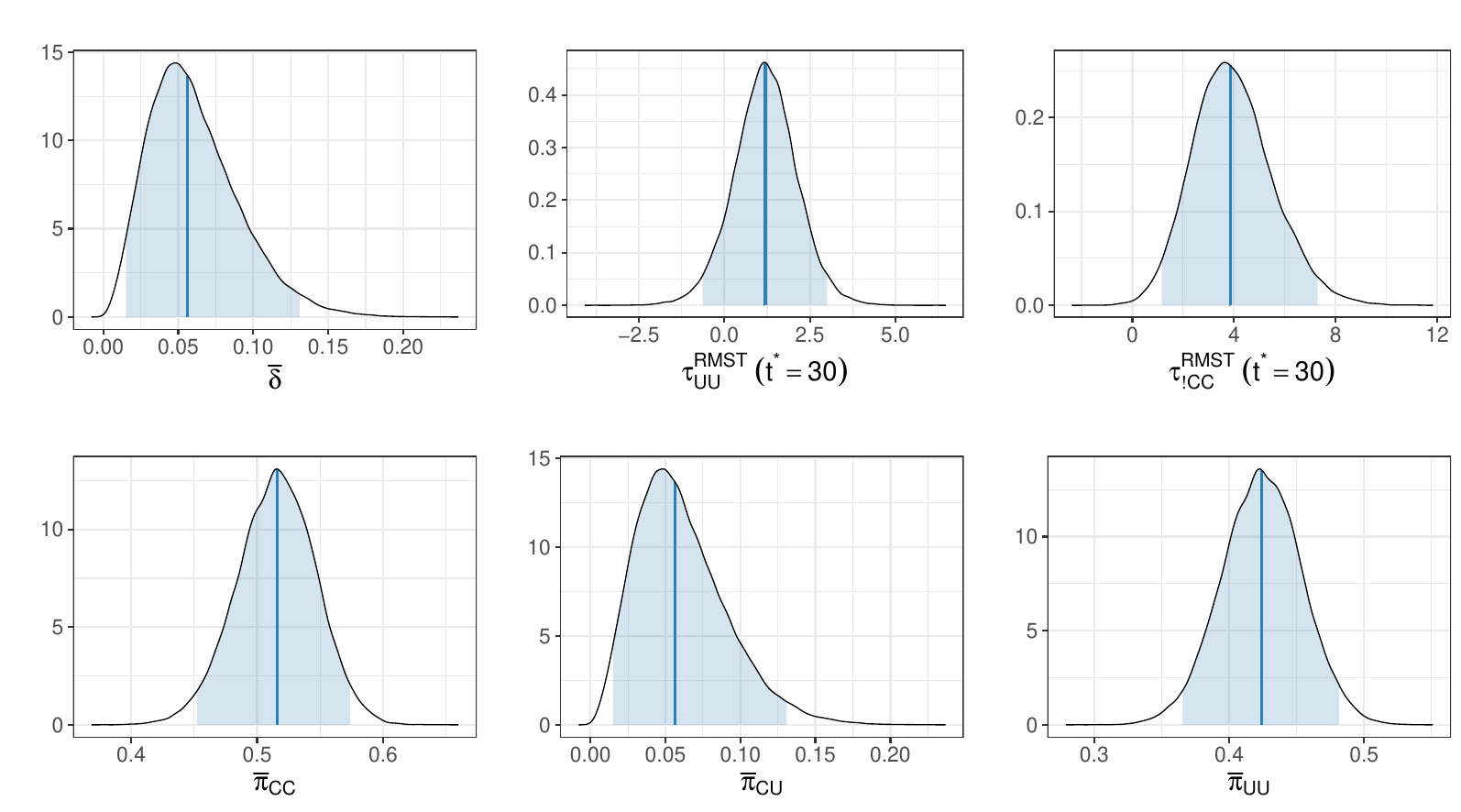}
    \caption{}
    \label{fig:nivas-post}
  \end{subfigure}
  \begin{subfigure}[b]{0.9\textwidth}
    \includegraphics[width=\linewidth]{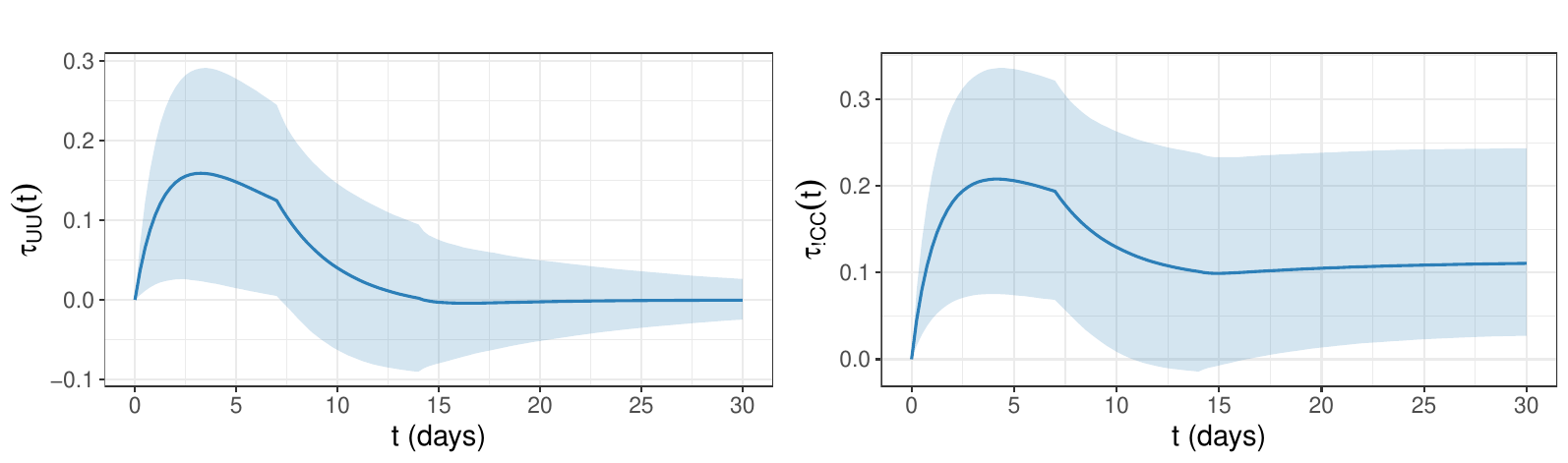}
    \caption{}
    \label{fig:nivas-tau}
  \end{subfigure} 
  \caption{Posterior summaries of the relevant quantities. Panel (a): posterior distributions of the difference in cure probabilities, difference in RMST for $\mathrm{UU}$ and $\neg\mathrm{CC}$, and marginal principal strata probabilities. The vertical solid line indicates the posterior median, and the shaded area represents the 95\% credible interval. Panel (b): posterior median (solid line) and 95\% credible intervals (shaded area) of the difference in survival probability for $\mathrm{UU}$ and $\neg\mathrm{CC}$ patients over time.}
  \label{fig:nivas-res}
\end{figure}

The majority of patients in our study are either always-cured (posterior median of $\overline{\pi}_{\mathrm{CC}}$ of $0.52$, with 95\% credible interval from $0.45$ to $0.57$) or always-uncured (posterior median of $\overline{\pi}_{\mathrm{UU}}$ of $0.42$, with 95\% credible interval from $0.37$ to $0.48$). The posterior distribution $\overline{\pi}_{\mathrm{CU}}$ is structurally the same as that of $\overline{\delta}$, as a consequence of the monotonicity assumption we adopted. 

The estimated principal strata probabilities allow to evaluate the distribution of baseline covariates within principal strata. Specifically, one can obtain the mean of a given covariate $X_i$ by principal stratum as $\sum_{i=1}^n \pi_g(X_i) X_i / \sum_{i=1}^n \pi_g(X_i)$ for $g \in\mathcal{G}$. Repeating this computation across posterior draws of $\pi_g(X_i)$ yields the posterior distribution of covariates mean by principal stratum, which we summarize in Table~\ref{tab:ps-characterization}. 
We find that the $\mathrm{UU}$ stratum is characterized by a higher prevalence of patients with an IGS score $\geq 40$, white cell count $>$ 20000 n/$\mu$liter, who have undergone oesophagectomy, have received epidural analgesia, and experienced copious tracheal secretions. Moreover, $\mathrm{UU}$ patients generally report lower pH, lower PaO$_2$/FiO$_2$ ratio, and less extubations within 6 hours following the end of surgery compared to $\mathrm{CC}$ patients. Overall, this depicts the $\mathrm{UU}$ stratum as a more severely ill subgroup, with the baseline characteristics we just mentioned as the main drivers of discrimination between $\mathrm{CC}$ and $\mathrm{UU}$ patients. Covariate distributions for the $\mathrm{CU}$ stratum are instead associated with wide intervals, reflecting substantial uncertainty about its composition.

\begin{table}[!h]
\centering
\caption{Posterior mean (95\% credible interval) of the mean of baseline covariates by principal stratum. For binary variables, the mean corresponds to the proportion (\%).}
\label{tab:ps-characterization}
\resizebox{\ifdim\width>\linewidth\linewidth\else\width\fi}{!}{
\begin{tabular}[t]{lccc}
\toprule
 & Stratum $\mathrm{CC}$ & Stratum $\mathrm{CU}$ & Stratum $\mathrm{UU}$\\
\cmidrule(l{3pt}r{3pt}){2-4}
Variable & \multicolumn{3}{c}{\textit{Mean or \% (95\% credible interval)}} \\
\midrule
pH & 7.43 (7.42, 7.44) & 7.43 (7.38, 7.46) & 7.40 (7.39, 7.41)\\
PaO$_2$/FiO$_2$ (mm Hg) & 203.65 (195.24, 212.16) & 178.62 (133.50, 227.22) & 184.10 (173.52, 195.03)\\
Male gender & 75\% (71\%, 79\%) & 54\% (22\%, 81\%) & 81\% (76\%, 86\%)\\
IGS score $>$ 40 & 15\% (10\%, 19\%) & 19\% (3\%, 45\%) & 33\% (28\%, 39\%)\\
Psychotropic use & 10\% (7\%, 12\%) & 10\% (1\%, 28\%) & 12\% (9\%, 16\%)\\
White cell count $>$ 20000 n/$\mu$liter & 11\% (7\%, 14\%) & 16\% (3\%, 40\%) & 20\% (14\%, 25\%)\\
Oesophagectomy & 5\% (3\%, 7\%) & 4\% (0\%, 14\%) & 12\% (9\%, 16\%)\\
Epidural analgesia & 12\% (9\%, 16\%) & 8\% (1\%, 23\%) & 20\% (15\%, 24\%)\\
Extubated$<$ 6-hr after the end of surgery & 71\% (66\%, 76\%) & 51\% (20\%, 78\%) & 57\% (51\%, 64\%)\\
Copious tracheal secretions & 35\% (30\%, 40\%) & 26\% (6\%, 55\%) & 46\% (40\%, 52\%)\\
Age $\geq 60$ & 60\% (55\%, 65\%) & 50\% (17\%, 78\%) & 65\% (59\%, 71\%)\\
Upper abdominal surgery & 62\% (57\%, 67\%) & 48\% (19\%, 76\%) & 66\% (60\%, 72\%)\\
\bottomrule
\end{tabular}}
\end{table}

The conclusions are quite robust to the monotonicity assumption we imposed. Moreover, inclusion of additional information on the cure status proves helpful to obtain a more precise inference. See Appendix~\ref{app:nivas} for a discussion on these points.

From a precision medicine perspective, these results suggest that principal stratification may serve not only as a causal inference tool but also as a patient-clustering framework, enabling the identification of clinically meaningful subgroups with heterogeneous responses to NIV. Such information could ultimately help clinicians target NIV to patients who are most likely to derive benefit while avoiding unnecessary interventions in lower-risk individuals.

\section{Discussion}

In this paper, we proposed a Bayesian model-based approach to causal inference for cure models, leveraging the principal stratification framework to define meaningful estimands. Our method proves competitive compared to that of \citet{wang2024causal} in numerical experiments, and more applicable in practice because it does not require the choice of a substitutional variable. Sensitivity analyses to key assumptions can be easily performed, as illustrated with an application to the NIVAS study. Within this context, we show that inclusion of additional information on the cure status for censored patients can sharpen inference. Moreover, we highlight how the characterization of principal strata with observed baseline covariates is a useful tool to inform clinicians on the composition of these latent subgroups.
Since our approach relies on (semi-)parametric models, future extensions include implementation of non-parametric methods.

\paragraph{Acknowledgments}
We are grateful to Catherine Legrand for her time  and insightful discussions. We gratefully acknowledge the investigators of the NIVAS study for granting access to the data used in this work.

\bibliographystyle{apalike} 
\bibliography{references}

\clearpage
\appendix


\part*{Appendix}

\section{Proofs}
\label{app:proofs}

\subsection{Proofs of Section~\ref{sec:identification}}

\begin{proof}[Proof of Proposition~\ref{prp:ident_1}]
From Assumptions~\ref{ass:sutva}--\ref{ass:positivity}, we get 
\begin{align*}
S^{(z)}(t\mid X_i) &= \bbP(T_i(z) > t \mid X_i)  = \bbP(T_i(z) > t \mid X_i, Z_i=z) \\ 
&= \bbP(T_i > t \mid X_i, Z_i=z).
\end{align*}
The identifiability of $\bbP(T_i > t \mid X_i, Z_i=z)$ follows from standard survival analysis arguments using Assumptions~\ref{ass:cond-indep-cens} and \ref{ass:suff-large-cens}. $S^{(z)}(t)$ is identifiable from the identity $S^{(z)}(t) = \bbE[S^{(z)}(t\mid X_i)]$.

Under the additional assumption that $t^* \geq t_{\max}^{(z)}(X_i)$, it holds that $T_i(z) > t^*$ is equivalent to $B_i(z)=0$ given $X_i$, so that
$$
S^{(z)}(t^*\mid X_i) = \bbP(B_i(z)=0\mid X_i) = 1-p^{(z)}(X_i),
$$
and thus both
$
\delta(X_i)=(1-p^{(1)}(X_i))-(1-p^{(0)}(X_i))
$
and $\bar\delta = \bbE[\delta(X_i)]$ are identifiable.
\end{proof} 

\begin{proof}[Proof of Proposition~\ref{prp:ident_2}]
By Proposition~\ref{prp:ident_1}, the two 
probabilities $p^{(1)}(x)$ and $p^{(0)}(x)$ are identifiable. Furthermore, by definition,
\begin{align*}
p^{(1)}(x)
=
\pi_{\rm UU}(x)+\pi_{\rm UC}(x)
\quad \text{and}\quad
p^{(0)}(x)
=
\pi_{\rm UU}(x)+\pi_{\rm CU}(x).
\end{align*}
By Assumption~\ref{ass:cond-mono}, it holds $
\pi_{\rm UU}(x)=\min\{p^{(1)}(x),p^{(0)}(x)\}
$. Plugging this expression into the first identities, we get
\begin{align*}
\pi_{\rm UC}(x) &=
p^{(1)}(x)-\pi_{\rm UU}(x) = \max\{0,p^{(1)}(x)-p^{(0)}(x)\}, \\
\pi_{\rm CU}(x)
&= p^{(0)}(x)-\pi_{\rm UU}(x) = \max\{0,p^{(0)}(x)-p^{(1)}(x)\}, \\
\pi_{\rm CC}(x)
&= 1-\pi_{\rm UU}(x)-\pi_{\rm UC}(x)-\pi_{\rm CU}(x) =
1-\max\{p^{(1)}(x),p^{(0)}(x)\}.
\end{align*}
Since everything quantity in the RHSs is identifiable, so are the principal strata proportions $\pi_{g}(x)$ for $g \in \cG$.
\end{proof}

\begin{proof}[Proof of Proposition~\ref{prp:ident_3}] The proof of the identification under Assumption~\ref{ass:vw} already appears in \citet{wang2024causal} but we provide it here for completeness and also because it allows introducing the right notations for the proof with Assumption~\ref{ass:cox}. Thanks to Propositions~\ref{prp:ident_1} and \ref{prp:ident_2}, $S^{(z)}(t\mid X_i=x)$ and $\pi_g(x)$ are identifiable for $z \in \{0,1\}$ and $g \in \cG$. Furthermore, it holds
\begin{equation}
\label{eq:identproof}
S^{(z)}(t\mid X_i=x)= \pi_{\rm UU}(x) S_{\rm UU}^{(z)}(t \mid X_i=x) + \pi_{g_z}(x) S_{g_z}^{(z)}(t \mid X_i=x) + 1- p^{(z)}(x),
\end{equation}
where $g_1 = {\rm UC}$ and $g_0 = {\rm CU}$. By Assumption~\ref{ass:vw}, $S_{g}^{(z)}(t \mid X_i=x) = S_{g}^{(z)}(t \mid U_i=u)$ for $g \in \{{\rm UU}, g_z\}$. Furthermore, $\pi_{\rm UU}(x)+\pi_{g_z}(x) = p^{(z)}(x)$ and 
$$
q(x) := \frac{\pi_{\rm UU}(x)}{p^{(z)}(x)} = \bbP(B_i(1-z)=1 \mid X_i=x, B_i(z)=1),
$$
so that, also by Assumption~\ref{ass:vw}, for a given value of $U_i=u$, we can always find two values $v_1$ and $v_2$ such that $q(u,v_1)\neq q(u,v_2)$. Letting 
$$
r(x) := \frac{S^{(z)}(t \mid X_i=x)-1+p^{(z)}(x)}{p^{(z)}(x)},
$$
we thus arrive at the following set of equation for a given $u$:
$$
\begin{cases}
r(u,v_1) =& q(u,v_1) S_{\rm UU}^{(z)}(t \mid U_i=u)+(1-q(u,v_1)) S_{g_z}^{(z)}(t \mid U_i=u), \\ 
r(u,v_2) =& q(u,v_2) S_{\rm UU}^{(z)}(t \mid U_i=u)+(1-q(u,v_2)) S_{g_z}^{(z)}(t \mid U_i=u).
\end{cases}
$$
Since $q(u,v_1) \neq q(u,v_2)$, the above linear system is invertible, and both $S_{\rm UU}^{(z)}(t \mid U_i=u)$ and $S_{g_z}^{(z)}(t \mid U_i=u)$ are identifiable, which ends the proof of the first case.

Now regarding identification under Assumption~\ref{ass:cox}, it holds
$$
S^{(z)}_g(t\mid X_i=x) = \exp\left(-\Lambda^{(z)}_g(t) \exp\left(x^\top \beta^{(z)}_g\right)\right) 
$$
for $g \in \{{\rm UU}, g_z\}$ and for some vector $\beta^{(z)}_g \in \bbR^d$ and positive function $\Lambda_g^{(z)}$. Assume that there exist two different sets 
$$\{\Lambda_g^{(z)},\beta_g^{(z)}\mid g \in \{{\rm UU}, g_z\}, z\in \{0,1\}\} \quad \text{and}\quad \{\wt\Lambda_g^{(z)},\wt\beta_g^{(z)}\mid g \in \{{\rm UU}, g_z\}, z\in \{0,1\}\},
$$ 
leading to the same observable data distribution. By Equation~\eqref{eq:identproof}, it holds
$$
\pi_{\rm UU}(x)(S^{(z)}_{\rm UU}(t\mid X_i=x)-\wt S^{(z)}_{\rm UU}(t\mid X_i=x)) = \pi_{g_z}(x)(\wt S^{(z)}_{g_z}(t\mid X_i=x)-S^{(z)}_{g_z}(t\mid X_i=x)). 
$$
Let us first assume that $\supp \pi_{\rm UU} \neq \supp \pi_{g_z}$. We can assume without loss of generality that there exists an open subset on which $\pi_{\rm UU}(x)>0$ and $\pi_{g_z}(x)=0$. On this subset, it holds $S^{(z)}_{\rm UU}(t\mid X_i=x)=\wt S^{(z)}_{\rm UU}(t\mid X_i=x)$ for all $t$, from which we deduce that $\beta_{\rm UU}^{(z)} = \wt \beta_{\rm UU}^{(z)}$ and $\Lambda_{\rm UU}^{(z)} = \wt \Lambda_{\rm UU}^{(z)}$. Therefore, $S^{(z)}_{\rm UU}(t\mid X_i=x) = \wt S^{(z)}_{\rm UU}(t\mid X_i=x)$ on the whole $\cX$ which implies that 
$S^{(z)}_{g_z}(t\mid X_i=x) = \wt S^{(z)}_{g_z}(t\mid X_i=x)$ on the support of $\pi_{g_z}$, leading to $\beta_{g_z}^{(z)} = \wt \beta_{g_z}^{(z)}$ and $\Lambda_{g_z}^{(z)} = \wt \Lambda_{g_z}^{(z)}$, and a contradiction. 

Now assume that $\supp \pi_{\rm UU} = \supp \pi_{g_z}$. Then on this common support
$$
\rho(x):=\frac{\pi_{g_z}(x)}{\pi_{\rm UU}(x)} = \frac{S^{(z)}_{\rm UU}(t\mid X_i=x)-\wt S^{(z)}_{\rm UU}(t\mid X_i=x)}{\wt S^{(z)}_{g_z}(t\mid X_i=x)-S^{(z)}_{g_z}(t\mid X_i=x)} =: \Phi_t(A^\top x),
$$
where $\Phi_t$ is twice differentiable and $A = (\beta^{(z)}_g,\wt \beta^{(z)}_g)_{g \in \{{\rm UU},g_z\}}$ is of rank at most $4$. Therefore, it holds
$$
\rank(H\rho(x)) = \rank(A^\top H\Phi_t(A^\top x) A) \leq \rank A \leq 4,
$$
which leads to a contradiction. 
\end{proof}

\subsection{Proofs of Section~\ref{sec:bayes}}

\begin{proof}[Proof of Theorem~\ref{thm:likelihood-standard}]
For each subject $i$ we observe the random variables $(Y_i,\Delta_i,Z_i,X_i)$. We assume the joint distribution of these variables of all units is governed by a generic parameter $\theta$, conditional on which the random variables for each unit are i.i.d.. Let $O_n := (Y_i,\Delta_i,Z_i,X_i)_{i=1}^n$. With a slight abuse of notation, we use $\mathbb{P}(\cdot \mid \cdot)$ to denote either a probability for discrete variables or a probability density for continuous variables. The joint distribution of the variables factorizes as:
{\allowdisplaybreaks
\begin{align*}
    \mathbb{P}(O_n) &= \prod_{i=1}^n \mathbb{P}(Y_i,\Delta_i,Z_i,X_i) \\
    &= \prod_{i=1}^n \mathbb{P}(Y_i,\Delta_i \mid X_i,Z_i) \mathbb{P}(Z_i \mid X_i) \mathbb{P}(X_i) \\
    &\propto \prod_{i=1}^n \mathbb{P}(Y_i,\Delta_i \mid X_i,Z_i) \\
    &\propto \prod_{i=1}^n \mathbb{P}(T_i=Y_i \mid X_i,Z_i)^{\Delta_i} \mathbb{P}(T_i>Y_i \mid X_i,Z_i)^{1-\Delta_i} \\
    &= \prod_{i=1}^n \left[ \mathbb{P}(T_i=Y_i \mid X_i,Z_i,B_i=1) \mathbb{P}(B_i=1 \mid X_i,Z_i) \right]^{\Delta_i} \\
    &\qquad \times \left[ \mathbb{P}(T_i>Y_i \mid X_i,Z_i,B_i=1) \mathbb{P}(B_i=1 \mid X_i,Z_i) + \mathbb{P}(B_i=0 \mid X_i, Z_i) \right]^{1-\Delta_i} \\
    &= \prod_{z\in\{0,1\}} \prod_{i:Z_i=z} \left[ \mathbb{P}(T_i=Y_i \mid X_i,Z_i=z,B_i=1) \mathbb{P}(B_i=1 \mid X_i,Z_i=z) \right]^{\Delta_i} \\
    &\qquad \times \left[ \mathbb{P}(T_i>Y_i \mid X_i,Z_i=z,B_i=1) \mathbb{P}(B_i=1 \mid X_i,Z_i=z) + \mathbb{P}(B_i=0 \mid X_i, Z_i=z) \right]^{1-\Delta_i} 
\end{align*}
}
The first proportional sign holds because $\mathbb{P}(Z_i \mid X_i)$ and $\mathbb{P}(X_i)$ do not depend on the parameters governing the outcome model, which are the objective of our inference.
The second proportional sign holds because, under Assumption~\ref{ass:cond-indep-cens}, we have that 
\begin{equation*}
    \mathbb{P}(Y_i,\Delta_i=1 \mid X_i,Z_i) 
    = \mathbb{P}(T_i=Y_i, C_i \geq Y_i \mid X_i,Z_i) 
    \propto \mathbb{P}(T_i=Y_i \mid X_i,Z_i),
\end{equation*}
\begin{equation*}
    \mathbb{P}(Y_i,\Delta_i=0 \mid X_i,Z_i)
    = \mathbb{P}(T_i>Y_i, C_i=Y_i \mid X_i,Z_i) 
    \propto \mathbb{P}(T_i>Y_i \mid X_i,Z_i).
\end{equation*}
The third equality holds because, recalling that $T_i=\infty$ for $B_i=0$ and $Y_i<\infty$: 
\begin{equation*}
    \mathbb{P}(T_i=Y_i \mid X_i,Z_i) 
    = \mathbb{P}(T_i=Y_i \mid X_i,Z_i,B_i=1) \mathbb{P}(B_i=1 \mid X_i,Z_i) + 0 \times \mathbb{P}(B_i=0 \mid X_i,Z_i),
\end{equation*}
\begin{equation*}
    \mathbb{P}(T_i>Y_i \mid X_i,Z_i) 
    = \mathbb{P}(T_i>Y_i \mid X_i,Z_i,B_i=1) \mathbb{P}(B_i=1 \mid X_i,Z_i) + 1 \times \mathbb{P}(B_i=0 \mid X_i,Z_i). 
\end{equation*}
Under Assumptions~\ref{ass:sutva} and \ref{ass:ignorability}, for $z\in\{0,1\}$, we have that: 
(i) $\mathbb{P}(T_i=Y_i \mid X_i,Z_i=z,B_i=1) = \mathbb{P}(T_i(z)=Y_i \mid X_i,B_i(z)=1) = f_{\mathrm{U}}^{(z)}(Y_i \mid X_i)$; 
(ii) $\mathbb{P}(T_i>Y_i \mid X_i,Z_i=z,B_i=1) = \mathbb{P}(T_i(z)>Y_i \mid X_i,B_i(z)=1) = S_{\mathrm{U}}^{(z)}(Y_i \mid X_i)$; 
(iii) $\mathbb{P}(B_i=1 \mid X_i, Z_i=z) = \mathbb{P}(B_i(z)=1 \mid X_i) = p^{(z)}(X_i)$. By substituting these quantities to the final form of $\mathbb{P}(Y,\Delta,Z,X)$ we obtain the likelihood in Theorem~\ref{thm:likelihood-standard}.
\end{proof}

\begin{proof}[Proof of Equations~\eqref{eq:peq1}--\eqref{eq:Seq2}]
Under treatment, the probability of being uncured can be written as 
{\allowdisplaybreaks
\begin{align*}
    p^{(1)}(X_i) &= \mathbb{P}(B_i(1)=1 \mid X_i) \\
    &= \mathbb{P}(B_i(1)=1 \mid X_i, B_i(0)=1) \mathbb{P}(B_i(0)=1 \mid X_i) \\
    &\quad + \mathbb{P}(B_i(1)=1 \mid X_i, B_i(0)=0) \mathbb{P}(B_i(0)=0 \mid X_i) \\
    &= \mathbb{P}(B_i(1)=1, B_i(0)=1 \mid X_i) + \mathbb{P}(B_i(1)=1, B_i(0)=0 \mid X_i) \\
    &= \pi_{\mathrm{UU}}(X_i) + \pi_{\mathrm{UC}}(X_i) ,
\end{align*}
}
which is Equation~\eqref{eq:peq1}. Equation~\eqref{eq:peq2} can be derived with the same rationale. Similarly, the survival probability for the uncured under treatment is 
{\allowdisplaybreaks
\begin{align*}
    S^{(1)}_{\mathrm{U}}(t \mid X_i) &= \mathbb{P}(Y_i(1)>t \mid X_i, B_i(1)=1) \\
    &= \mathbb{P}(Y_i(1)>t \mid X_i, B_i(1)=1, B_i(0)=1) \mathbb{P}(B_i(0)=1 \mid X_i, B_i(1)=1) \\
    &\quad + \mathbb{P}(Y_i(1)>t \mid X_i, B_i(1)=1, B_i(0)=0) \mathbb{P}(B_i(0)=0 \mid X_i, B_i(1)=1) \\
    &= S_{\mathrm{UU}}^{(1)}(t \mid X_i) \frac{\mathbb{P}(B_i(1)=1, B_i(0)=1 \mid X_i)}{\mathbb{P}(B_i(1)=1 \mid X_i)} \\
    &\quad + S_{\mathrm{UC}}^{(1)}(t \mid X_i) \frac{\mathbb{P}(B_i(1)=1, B_i(0)=0 \mid X_i)}{\mathbb{P}(B_i(1)=1 \mid X_i)} \\
    &= \frac{\pi_{\mathrm{UU}}(X_i) S_{\mathrm{UU}}^{(1)}(t\mid X_i) + \pi_{\mathrm{UC}}(X_i) S_{\mathrm{UC}}^{(1)}(t\mid X_i)}{\pi_{\mathrm{UU}}(X_i) + \pi_{\mathrm{UC}}(X_i)} ,
\end{align*}
}
proving Equation~\eqref{eq:Seq1}. Equation~\eqref{eq:Seq2} follows the same reasoning.
\end{proof}

\begin{proof}[Proof of Equations~\eqref{eq:Sg} and \eqref{eq:Sunion}]
Let $\mathcal{X}$ denote the support of $X_i$, and $f(x)$ its probability density function. First, for a given stratum $g\in\cG$, we have:
{\allowdisplaybreaks
\begin{align*}
    S_{g}^{(z)}(t) 
    &= \mathbb{P}(T_i(z)>t \mid G_i=g) \\
    &= \mathbb{E}\left[ \mathbb{I}\{T_i(z)>t\} \mid G_i=g \right] \\
    &= \mathbb{E}\left[ \mathbb{E}\left[ \mathbb{I}\{T_i(z)>t\} \mid X_i,G_i=g \right] \mid G_i=g \right] \\
    &= \mathbb{E}\left[ \mathbb{P}(T_i(z)>t \mid X_i,G_i=g) \mid G_i=g \right] \\
    &= \int_{\mathcal{X}} \mathbb{P}(T_i(z)>t \mid X_i=x,G_i=g) f(x \mid G_i=g) \, \mathrm{d}x \\
    &= \int_{\mathcal{X}} \mathbb{P}(T_i(z)>t \mid X_i=x,G_i=g) \frac{\mathbb{P}(G_i=g \mid X_i=x)}{\mathbb{P}(G_i=g)} f(x) \, \mathrm{d}x \\
    &= \mathbb{E}\left[ \frac{\pi_g(X_i)}{\bar\pi_g} S_g^{(z)}(t \mid X_i) \right] ,
\end{align*}
}
where the third equality follows from the law of iterated expectations, and the sixth equality follows from Bayes theorem. This proves Equation~\eqref{eq:Sg}.
Second, for any subset $K\subset\mathcal{G}$:
{\allowdisplaybreaks
\begin{align*}
    S_{K}^{(z)}(t) 
    &= \mathbb{P}(T_i(z)>t \mid G_i\in K) \\
    & = \mathbb{E}\left[ \mathbb{I}\{T_i(z)>t\} \mid G_i\in K \right] \\
    & = \mathbb{E}\left[ \mathbb{E}\left[ \mathbb{I}\{T_i(z)>t\} \mid X_i,G_i\in K \right] \mid G_i\in K \right] \\
    & = \mathbb{E}\left[ \mathbb{P}(T_i(z)>t \mid X_i,G_i\in K) \mid G_i\in K \right] \\
    & = \int_{\mathcal{X}} \mathbb{P}(T_i(z)>t \mid X_i=x,G_i\in K) f(x \mid G_i\in K) \, \mathrm{d}x \\
    & = \int_{\mathcal{X}} \mathbb{P}(T_i(z)>t \mid X_i=x,G_i\in K) \frac{\mathbb{P}(G_i\in K \mid X_i=x)}{\mathbb{P}(G_i\in K)} f(x) \, \mathrm{d}x \\
    & = \mathbb{E}\left[ \frac{\mathbb{P}(G_i\in K \mid X_i)}{\mathbb{P}(G_i\in K)} \mathbb{P}(T_i(z)>t \mid X_i,G_i\in K) \right] \\
    & = \mathbb{E}\left[ \frac{\mathbb{P}(G_i\in K \mid X_i)}{\mathbb{P}(G_i\in K)} \frac{\mathbb{P}(T_i(z)>t,G_i\in K \mid X_i)}{\mathbb{P}(G_i\in K\mid X_i) } \right] \\
    & = \mathbb{E}\left[ \frac{\sum_{g\in K} \mathbb{P}(T_i(z)>t \mid G_i=g,X_i) \mathbb{P}(G_i=g \mid X_i)}{\sum_{g\in K} \mathbb{P}(G_i=g)} \right] \\
    & = \mathbb{E}\left[ \frac{\sum_{g\in K} \pi_g(X_i) S_g^{(z)}(t \mid X_i)}{\sum_{g\in K} \bar\pi_g} \right] , 
\end{align*}
}
where the third equality follows from the law of iterated expectations, the sixth equality from Bayes theorem, and the nineth equality applies the law of total probability on the restricted support $K$.
This proves Equation~\eqref{eq:Sunion}.
\end{proof}

\begin{proof}[Proof of Proposition~\ref{prp:likelihood-standard-cure}]
For each subject $i$ we observe the random variables $(Y_i,\Delta_i,Z_i,X_i,D_i)$. We assume the joint distribution of these variables of all units is governed by a generic parameter $\theta$, conditional on which the random variables for each unit are i.i.d.. Let $O_n^+ = (Y_i,\Delta_i,Z_i,X_i,D_i)_{i=1}^n$. With a slight abuse of notation, we use $\mathbb{P}(\cdot \mid \cdot)$ to denote either a probability for discrete variables or a probability density for continuous variables. The joint distribution of the variables factorizes as:
{\allowdisplaybreaks
\begin{align*}
    \mathbb{P}(O_n^+) &= \prod_{i=1}^n \mathbb{P}(Y_i,\Delta_i,D_i,Z_i,X_i) \\
    &= \prod_{i=1}^n \mathbb{P}(Y_i,\Delta_i,D_i \mid X_i,Z_i) \mathbb{P}(Z_i \mid X_i) \mathbb{P}(X_i) \\
    &\propto \prod_{i=1}^n \mathbb{P}(Y_i,\Delta_i,D_i \mid X_i,Z_i) \\
    &\propto \prod_{i=1}^n \mathbb{P}(T_i=Y_i \mid X_i,Z_i)^{\Delta_i} \mathbb{P}(T_i>Y_i \mid X_i,Z_i)^{(1-\Delta_i)(1-D_i)} \mathbb{P}(B_i=0 \mid X_i,Z_i)^{(1-\Delta_i)D_i} \\
    &= \prod_{i=1}^n \left[ \mathbb{P}(T_i=Y_i \mid X_i,Z_i,B_i=1) \mathbb{P}(B_i=1 \mid X_i,Z_i) \right]^{\Delta_i} \\
    &\qquad \times \left[ \mathbb{P}(T_i>Y_i \mid X_i,Z_i,B_i=1) \mathbb{P}(B_i=1 \mid X_i,Z_i) + \mathbb{P}(B_i=0 \mid X_i, Z_i) \right]^{(1-\Delta_i)(1-D_i)} \\
    &\qquad \times \left[ \mathbb{P}(B_i=0 \mid X_i, Z_i) \right]^{(1-\Delta_i)D_i} \\
    &= \prod_{z\in\{0,1\}} \prod_{i:Z_i=z} 
    \left[ \mathbb{P}(T_i=Y_i \mid X_i,Z_i=z,B_i=1) \mathbb{P}(B_i=1 \mid X_i,Z_i=z) \right]^{\Delta_i} \\
    &\qquad \times \left[ \mathbb{P}(T_i>Y_i \mid X_i,Z_i=z,B_i=1) \mathbb{P}(B_i=1 \mid X_i,Z_i=z) + \mathbb{P}(B_i=0 \mid X_i, Z_i=z) \right]^{(1-\Delta_i)(1-D_i)} \\
    &\qquad \times \left[ \mathbb{P}(B_i=0 \mid X_i, Z_i=z) \right]^{(1-\Delta_i)D_i} 
\end{align*}
}
The first proportional sign holds because $\mathbb{P}(Z_i \mid X_i)$ and $\mathbb{P}(X_i)$ do not depend on the parameters governing the outcome model, which are the objective of our inference.
The second proportional sign holds because, under Assumption~\ref{ass:cond-indep-cens}, we have that
\begin{equation*}
    \mathbb{P}(Y_i,\Delta_i=1 \mid X_i,Z_i) 
     = \mathbb{P}(T_i=Y_i, C_i \geq Y_i \mid X_i,Z_i) 
    \propto \mathbb{P}(T_i=Y_i \mid X_i,Z_i),
\end{equation*}
\begin{equation*}
    \mathbb{P}(Y_i,\Delta_i=0,D_i=0 \mid X_i,Z_i) 
    = \mathbb{P}(T_i>Y_i, C_i=Y_i \mid X_i,Z_i) 
    \propto \mathbb{P}(T_i>Y_i \mid X_i,Z_i),
\end{equation*}
\begin{align*}
    & \mathbb{P}(Y_i,\Delta_i=0,D_i=1 \mid X_i,Z_i) 
    = \mathbb{P}(T_i>Y_i, C_i=Y_i, T_i=\infty \mid X_i,Z_i) \\
    & \quad \propto \mathbb{P}(T_i>Y_i, T_i=\infty \mid X_i,Z_i) 
    = \mathbb{P}(T_i=\infty \mid X_i,Z_i) 
    = \mathbb{P}(B_i=0 \mid X_i,Z_i).
\end{align*}
where we recall that $T_i=\infty$ for $B_i=0$ and $Y_i<\infty$. The third equality holds because: 
\begin{equation*}
    \mathbb{P}(T_i=Y_i \mid X_i,Z_i) 
    = \mathbb{P}(T_i=Y_i \mid X_i,Z_i,B_i=1) \mathbb{P}(B_i=1 \mid X_i,Z_i) + 0 \times \mathbb{P}(B_i=0 \mid X_i,Z_i),
\end{equation*}
\begin{equation*}
    \mathbb{P}(T_i>Y_i \mid X_i,Z_i) 
    = \mathbb{P}(T_i>Y_i \mid X_i,Z_i,B_i=1) \mathbb{P}(B_i=1 \mid X_i,Z_i) + 1 \times \mathbb{P}(B_i=0 \mid X_i,Z_i).
\end{equation*}
Under Assumptions~\ref{ass:sutva} and \ref{ass:ignorability}, for $z\in\{0,1\}$, we have that: 
(i) $\mathbb{P}(T_i=Y_i \mid X_i,Z_i=z,B_i=1) = \mathbb{P}(T_i(z)=Y_i \mid X_i,B_i(z)=1) = f_{\mathrm{U}}^{(z)}(Y_i \mid X_i)$; 
(ii) $\mathbb{P}(T_i>Y_i \mid X_i,Z_i=z,B_i=1) = \mathbb{P}(T_i(z)>Y_i \mid X_i,B_i(z)=1) = S_{\mathrm{U}}^{(z)}(Y_i \mid X_i)$; 
(iii) $\mathbb{P}(B_i=1 \mid X_i, Z_i=z) = \mathbb{P}(B_i(z)=1 \mid X_i) = p^{(z)}(X_i)$. By substituting these quantities to the final form of $\mathbb{P}(Y,\Delta,Z,X,D)$ we get the likelihood in Proposition~\ref{prp:likelihood-standard-cure}.
\end{proof}

\section{Details on the simulation study} \label{app:simulations}

In the simulation study, we run our Bayesian model for 2 chains and 3000 iterations, discarding the first 1500 for each chain. We assess model convergence on each of the 100 simulated datasets by evaluating the effective sample size (ESS) and the $\hat{\mathrm{R}}$. We consider convergence to be reached when ESS is greater than 100 times the number of chains ($100 \times 2=200$ in our case) and $\hat{\mathrm{R}}<1.01$. When one of these conditions is not met, we run the model again with different initial values until convergence is reached. Table~\ref{stab:simres} reports simulation results for the 4 scenarios considered, complementing Figure~\ref{fig:simres} of the main text. 

\begin{table}[!h]
\centering
\caption{Simulation results for the four scenarios considered. True value, bias, and empirical standard error of the point estimate of the estimands $\overline{\delta}$ and $\tau_{\mathrm{UU}}^{\mathrm{RMST}}(t^*=30)$ and the marginal strata probabilities $\overline{\pi}_{g}$ across 100 replications for the two methods. In our Bayesian model, we take the posterior median as the point estimate.}
\label{stab:simres}
\centering
\resizebox{\ifdim\width>\linewidth\linewidth\else\width\fi}{!}{
\begin{tabular}[t]{llcccccccccccc}
\toprule
\multicolumn{2}{c}{ } & \multicolumn{3}{c}{Scenario 1} & \multicolumn{3}{c}{Scenario 2} & \multicolumn{3}{c}{Scenario 3} & \multicolumn{3}{c}{Scenario 4} \\
\cmidrule(l{3pt}r{3pt}){3-5} \cmidrule(l{3pt}r{3pt}){6-8} \cmidrule(l{3pt}r{3pt}){9-11} \cmidrule(l{3pt}r{3pt}){12-14}
Quantity & Method & Truth & Bias & Emp. SE & Truth & Bias & Emp. SE & Truth & Bias & Emp. SE & Truth & Bias & Emp. SE\\
\midrule
$\bar\delta$ & Wang & -0.008 & -0.004 & 0.025 & -0.008 & 0.019 & 0.107 & -0.008 & -0.004 & 0.026 & -0.008 & 0.003 & 0.024\\
 & Ours & -0.008 & -0.004 & 0.026 & -0.008 & 0.017 & 0.096 & -0.008 & -0.001 & 0.025 & -0.008 & 0.004 & 0.025\\
\addlinespace
$\bar\pi_{\mathrm{CC}}$ & Wang & 0.443 & 0.000 & 0.016 & 0.443 & -0.024 & 0.073 & 0.443 & 0.000 & 0.016 & 0.360 & 0.084 & 0.019\\
 & Ours & 0.443 & 0.003 & 0.015 & 0.443 & -0.026 & 0.051 & 0.443 & 0.003 & 0.015 & 0.360 & 0.087 & 0.019\\
\addlinespace
$\bar\pi_{\mathrm{CU}}$ & Wang & 0.087 & -0.005 & 0.018 & 0.087 & 0.027 & 0.072 & 0.087 & -0.005 & 0.017 & 0.170 & -0.080 & 0.017\\
 & Ours & 0.087 & -0.008 & 0.018 & 0.087 & 0.017 & 0.061 & 0.087 & -0.006 & 0.017 & 0.170 & -0.083 & 0.017\\
\addlinespace
$\bar\pi_{\mathrm{UC}}$ & Wang & 0.095 & 0.000 & 0.017 & 0.095 & 0.008 & 0.068 & 0.095 & 0.000 & 0.018 & 0.178 & -0.083 & 0.018\\
 & Ours & 0.095 & -0.004 & 0.017 & 0.095 & 0.000 & 0.048 & 0.095 & -0.005 & 0.017 & 0.178 & -0.087 & 0.018\\
\addlinespace
$\bar\pi_{\mathrm{UU}}$ & Wang & 0.374 & 0.005 & 0.018 & 0.374 & -0.010 & 0.068 & 0.374 & 0.005 & 0.017 & 0.291 & 0.079 & 0.017\\
 & Ours & 0.374 & 0.009 & 0.017 & 0.374 & -0.004 & 0.053 & 0.374 & 0.008 & 0.017 & 0.291 & 0.083 & 0.017\\
 \addlinespace
$\tau_{\mathrm{UU}}^{\mathrm{RMST}}(t^*=30)$ & Wang & 2.009 & 0.069 & 0.325 & 2.009 & 0.060 & 1.378 & 1.496 & -0.436 & 0.163 & 2.004 & -0.359 & 0.242\\
 & Ours & 2.009 & -0.069 & 0.270 & 2.009 & -0.194 & 0.716 & 1.496 & -0.130 & 0.208 & 2.004 & -0.448 & 0.235\\
\bottomrule
\end{tabular}}
\end{table}

\section{Details on the NIVAS study}
\label{app:nivas}

In this section we provide additional information on the baseline covariates and missing data handling for the NIVAS study. Moreover, we report further results from the data analysis presented in the main text, as well as results from additional models we fit.

\subsection{Baseline covariates and missing data handling}

Table~\ref{stab:covariates} reports descriptive summaries of the 12 baseline covariates considered in the analysis, as well as the number of missing values by treatment group. We address missing data via multiple imputation using the \texttt{mice} package in R \citep{mice}, and we follow the guidelines in \citet{zhou2010note} for Bayesian inference after multiple imputation. Specifically, we generate $M=10$ imputed datasets and obtain posterior samples of the quantities of interest. We then combine the $M$ sets of posterior draws, obtaining the final posterior distributions used for inference. 

\begin{table}[!h]
\centering
\caption{Summary statistics and number of missing values of baseline covariates by treatment group: NIV ($n=148$) and Oxygen ($n=145$).}
\label{stab:covariates}
\centering
\resizebox{\ifdim\width>\linewidth\linewidth\else\width\fi}{!}{
\begin{tabular}[t]{lcccc}
\toprule
\multicolumn{1}{c}{ } & \multicolumn{2}{c}{Summary statistics} & \multicolumn{2}{c}{Missing values (no.)} \\
\cmidrule(l{3pt}r{3pt}){2-3} \cmidrule(l{3pt}r{3pt}){4-5}
Variables & NIV & Oxygen & NIV & Oxygen\\
\midrule
\textit{Continuous -- mean (sd)} &  &  &  & \\
pH & 7.42 (0.07) & 7.41 (0.07) & 14 & 21\\
PaO$_2$/FiO$_2$ (mm Hg) & 200.80 (69.03) & 187.85 (70.96) & 14 & 20\\
\addlinespace
\textit{Binary -- no. (\%)} &  &  &  & \\
Male gender & 116 (78\%) & 108 (74\%) & 0 & 0\\
IGS score $>$ 40 & 34 (24\%) & 31 (22\%) & 4 & 5\\
Psychotropic use & 15 (10\%) & 16 (11\%) & 1 & 1\\
White cell count $>$ 20000 n/$\mu$liter & 17 (14\%) & 17 (14\%) & 26 & 27\\
Oesophagectomy & 14 (10\%) & 9 (6\%) & 8 & 2\\
Epidural analgesia & 23 (16\%) & 21 (14\%) & 0 & 0\\
Extubated $<$ 6-hr after the end of surgery & 89 (65\%) & 88 (63\%) & 11 & 6\\
Copious tracheal secretions & 58 (40\%) & 54 (38\%) & 3 & 2\\
Age $\geq 60$ & 92 (62\%) & 89 (61\%) & 0 & 0\\
Upper abdominal surgery & 93 (63\%) & 91 (63\%) & 0 & 0\\
\bottomrule
\end{tabular}}
\end{table}

\subsection{Additional results}

We now present additional data analysis results, comparing different models we fit. To facilitate presentation, we refer to the models using the labels summarized in Table~\ref{stab:model-labels}. 

Model 1, presented in the main text, assumes monotonicity (formulated to rule out $\mathrm{UC}$ patients) and incorporates auxiliary information on the cure status for censored patients, represented by the binary variable $D_i$. In our study, patients are considered cured if they are discharged from the hospital without experiencing the outcome. To account for this, censoring is applied at hospital discharge, loss to follow-up, or 30 days, whichever occurs first. Specifically, if a patient is censored due to hospital discharge, we set $D_i = 1$ (indicating they are cured). Conversely, if censoring occurs due to loss to follow-up or at 30 days, we set $D_i = 0$ (meaning their cure status remains unknown). 
Model 2 shares the same specification as Model 1 but relaxes the monotonicity assumption. Comparing these two models allows us to assess the sensitivity of our results to this assumption. 
Model 3, instead, maintains the monotonicity assumption but does not include the auxiliary information on the cure status. Consequently, patients are censored only at loss to follow-up or 30 days, whichever occurs first. 
Finally, Model 4 replicates Model 3 but relaxes the monotonicity assumption.

\begin{table}[h]
    \centering
    \caption{Model labels and corresponding characteristics. Monotonicity indicates whether the monotonicity assumption (formulated to rule out $\mathrm{UC}$ patients) is assumed. Cure information indicates whether additional information on the patient's cure status ($D_i$) is included in the likelihood. Censoring specifies when an observation is censored (whichever of those events comes first).}
    \label{stab:model-labels}
    \resizebox{\ifdim\width>\linewidth\linewidth\else\width\fi}{!}{
    \begin{tabular}{llll}
        \toprule
        Model label & Monotonicity & Cure information $D_i$ & Censoring \\
        \midrule
        Model 1 & Yes & Included & Hospital discharge, loss to follow-up, or 30 days \\
        Model 2 & No & Included & Hospital discharge, loss to follow-up, or 30 days \\
        Model 3 & Yes & Not included & Loss to follow-up or 30 days \\
        Model 4 & No & Not included & Loss to follow-up or 30 days \\
        \bottomrule
    \end{tabular}
    }
\end{table}

\paragraph{Model 1}
Figure~\ref{sfig:res_cure} shows the marginal survival functions estimated with Model 1, as a complement to what is presented in the main text. We report both the marginal survival functions by principal stratum, as well as by treatment group. The latter closely agree with the corresponding Kaplan–Meier estimators, showing that our estimates are consistent with the observed data. Table~\ref{stab:res_cure} reports the same posterior summaries presented in the main text. 

\begin{table}[!h]
\centering
\caption{Model 1. Posterior median and 95\% credible intervals of selected quantities.}
\label{stab:res_cure}
\centering
\begin{tabular}[t]{lcc}
\toprule
Quantity & Median & 95\% credible interval\\
\midrule
$\bar\delta$ & 0.06 & (0.01, 0.13)\\
$\tau_{\mathrm{UU}}^{\mathrm{RMST}}(t^*=30)$ & 1.20 & (-0.64, 3.02)\\
$\tau_{\neg\mathrm{CC}}^{\mathrm{RMST}}(t^*=30)$ & 3.85 & (1.14, 7.29)\\
$\bar\pi_{\mathrm{CC}}$ & 0.52 & (0.45, 0.57)\\
$\bar\pi_{\mathrm{CU}}$ & 0.06 & (0.01, 0.13)\\
$\bar\pi_{\mathrm{UU}}$ & 0.42 & (0.37, 0.48)\\
\bottomrule
\end{tabular}
\end{table}

\paragraph{Model 2}
We fit this model to assess sensitivity to the monotonicity assumption, which is now relaxed so that $\mathrm{UC}$ patients are allowed. Auxiliary information on the cure status for censored patients is still included. Posterior summaries of the relevant quantities are reported in Figure~\ref{sfig:res_cure_nomonot} and Table~\ref{stab:res_cure_nomonot}. 
The effect for always-uncured patients is quite robust to the monotonicity assumption, as we do not find substantial changes in the the posterior of $\tau_{\mathrm{UU}}^{\mathrm{RMST}}(t^*=30)$ and $\tau_{\mathrm{UU}}(t)$ compared to Model 1. 
Instead, the effect on the cure probability ($\bar\delta$) and on the survival for the non-always-cured union of strata ($\tau_{\neg\mathrm{CC}}^{\mathrm{RMST}}(t^*=30)$ and $\tau_{\neg\mathrm{CC}}(t)$) are now attenuated. This behavior is coherent with the drop of the monotonicity assumption, since we now include an additional stratum that directly impacts these estimands. Nonetheless, there is a moderate-to-high posterior probability of a positive effect for both $\bar\delta$ (77\%) and $\tau_{\neg\mathrm{CC}}^{\mathrm{RMST}}(t^*=30)$ (87\%).
No relevant change in the principal stratum proportions is detected, except for some adjustments due to the addition of the $\mathrm{UC}$ stratum. Most of the patients are still always-cured or always-uncured. The $\mathrm{CU}$ stratum remains relatively small, and the $\mathrm{UC}$ stratum is even smaller. 
Overall, we find that removing monotonicity does not substantially change our conclusions. 

\begin{table}[!h]
\centering
\caption{Model 2. Posterior median and 95\% credible intervals of selected quantities.}
\label{stab:res_cure_nomonot}
\centering
\begin{tabular}[t]{lcc}
\toprule
Quantity & Median & 95\% credible interval\\
\midrule
$\bar\delta$ & 0.03 & (-0.05, 0.12)\\
$\tau_{\mathrm{UU}}^{\mathrm{RMST}}(t^*=30)$ & 1.08 & (-0.93, 3.13)\\
$\tau_{\neg\mathrm{CC}}^{\mathrm{RMST}}(t^*=30)$ & 2.21 & (-1.71, 6.24)\\
$\bar\pi_{\mathrm{CC}}$ & 0.47 & (0.38, 0.54)\\
$\bar\pi_{\mathrm{CU}}$ & 0.08 & (0.02, 0.19)\\
$\bar\pi_{\mathrm{UC}}$ & 0.05 & (0.01, 0.13)\\
$\bar\pi_{\mathrm{UU}}$ & 0.39 & (0.30, 0.46)\\
\bottomrule
\end{tabular}
\end{table}

\paragraph{Model 3}
This specification replicates Model 1 but does not include auxiliary information on the cure status for censored patients. Comparing posterior distributions (Figure~\ref{sfig:res} and Table~\ref{stab:res}), we find that the posterior medians remain virtually unchanged, while the credible intervals are narrower in Model 1. This highlights how incorporating auxiliary data on the cure status, when available, sharpens posterior inference.

\begin{table}[!h]
\centering
\caption{Model 3. Posterior median and 95\% credible intervals of selected quantities.}
\label{stab:res}
\centering
\begin{tabular}[t]{lcc}
\toprule
Quantity & Median & 95\% credible interval\\
\midrule
$\bar\delta$ & 0.08 & (0.02, 0.21)\\
$\tau_{\mathrm{UU}}^{\mathrm{RMST}}(t^*=30)$ & 1.42 & (-1.94, 6.27)\\
$\tau_{\neg\mathrm{CC}}^{\mathrm{RMST}}(t^*=30)$ & 4.24 & (0.75, 8.37)\\
$\bar\pi_{\mathrm{CC}}$ & 0.48 & (0.37, 0.55)\\
$\bar\pi_{\mathrm{CU}}$ & 0.08 & (0.02, 0.21)\\
$\bar\pi_{\mathrm{UU}}$ & 0.43 & (0.37, 0.51)\\
\bottomrule
\end{tabular}
\end{table}

\paragraph{Model 4}
Finally, we estimate Model 4, which replicates the specification of Model 3 but relaxes the monotonicity assumption. Results are shown in Figure~\ref{sfig:res_nomonot} and Table~\ref{stab:res_nomonot}. A comparison of the posterior distributions between Model 3 and Model 4 yields findings that mirror those obtained when contrasting Model 1 and Model 2 in terms of sensitivity to monotonicity. Furthermore, by comparing the posteriors of Model 4 with those of Model 2, neither of which imposes monotonicity, we reaffirm that incorporating auxiliary information on the cure status enhances estimation precision, leading to narrower credible intervals in our empirical setting.

\begin{table}[h!]
\centering
\caption{Model 4. Posterior median and 95\% credible intervals of selected quantities.}
\label{stab:res_nomonot}
\centering
\begin{tabular}[t]{lcc}
\toprule
Quantity & Median & 95\% credible interval\\
\midrule
$\bar\delta$ & 0.04 & (-0.13, 0.20)\\
$\tau_{\mathrm{UU}}^{\mathrm{RMST}}(t^*=30)$ & 1.72 & (-1.87, 7.64)\\
$\tau_{\neg\mathrm{CC}}^{\mathrm{RMST}}(t^*=30)$ & 2.72 & (-1.23, 6.90)\\
$\bar\pi_{\mathrm{CC}}$ & 0.40 & (0.21, 0.51)\\
$\bar\pi_{\mathrm{CU}}$ & 0.12 & (0.04, 0.28)\\
$\bar\pi_{\mathrm{UC}}$ & 0.08 & (0.01, 0.24)\\
$\bar\pi_{\mathrm{UU}}$ & 0.39 & (0.30, 0.49)\\
\bottomrule
\end{tabular}
\end{table}


\begin{figure}[!h]
    \centering
    \begin{subfigure}[b]{0.9\textwidth}
    \includegraphics[width=\linewidth]{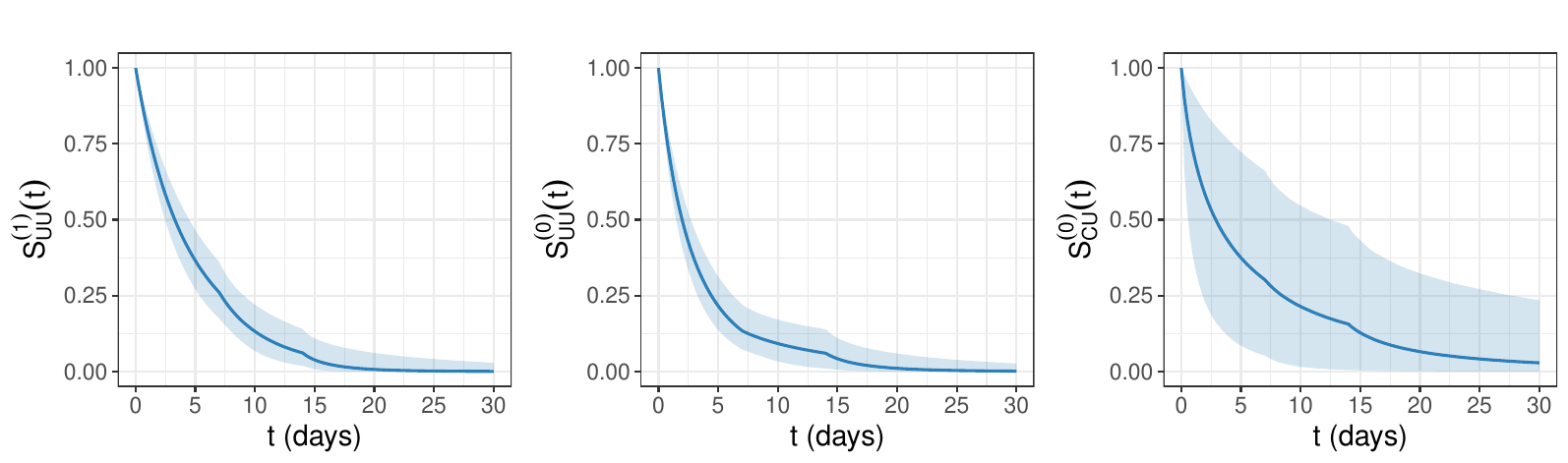}
    \caption{}
    \label{sfig:survival_cure_a}
    \end{subfigure}
    \begin{subfigure}[b]{0.9\textwidth}
        \includegraphics[width=\linewidth]{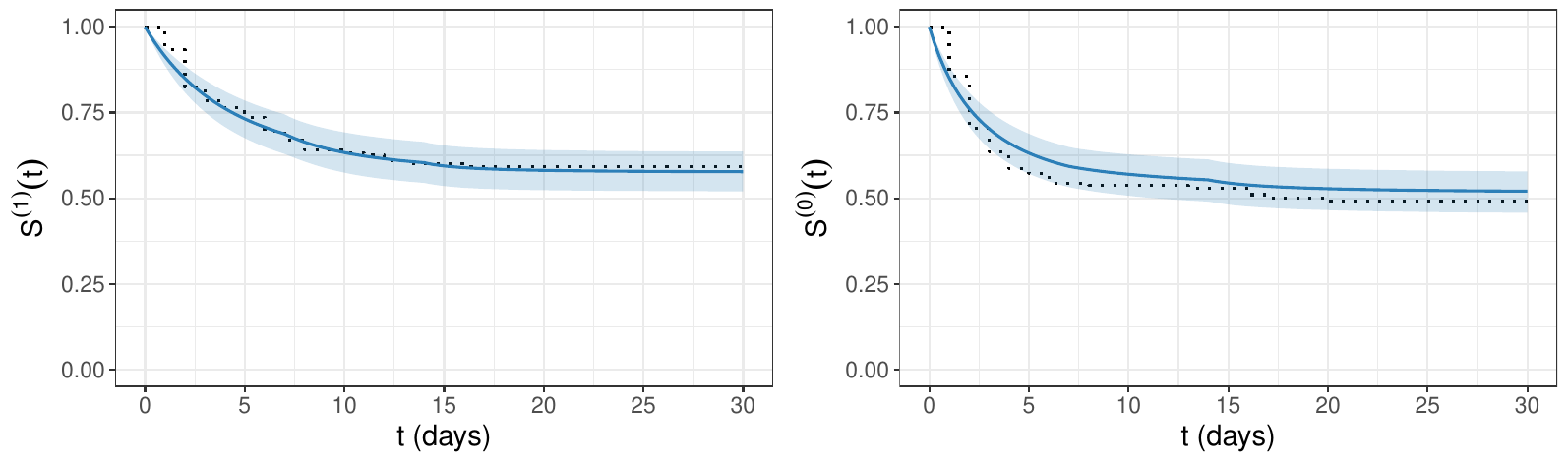}
        \caption{}
        \label{sfig:survival_cure_b}
    \end{subfigure}
    \caption{Model 1. Posterior summaries of the marginal survival curves. Panel (a): posterior median (solid line) and 95\% credible intervals (shaded area) of marginal survival curves by principal stratum. Panel (b): posterior median (solid line) and 95\% credible intervals (shaded area) of marginal survival curves by treatment group, where dotted lines show the Kaplan-Meier estimator of the corresponding survival curve.}
    \label{sfig:res_cure}
\end{figure}

\begin{figure}[h]
  \centering
  \begin{subfigure}[b]{0.75\textwidth}
    \includegraphics[width=\linewidth]{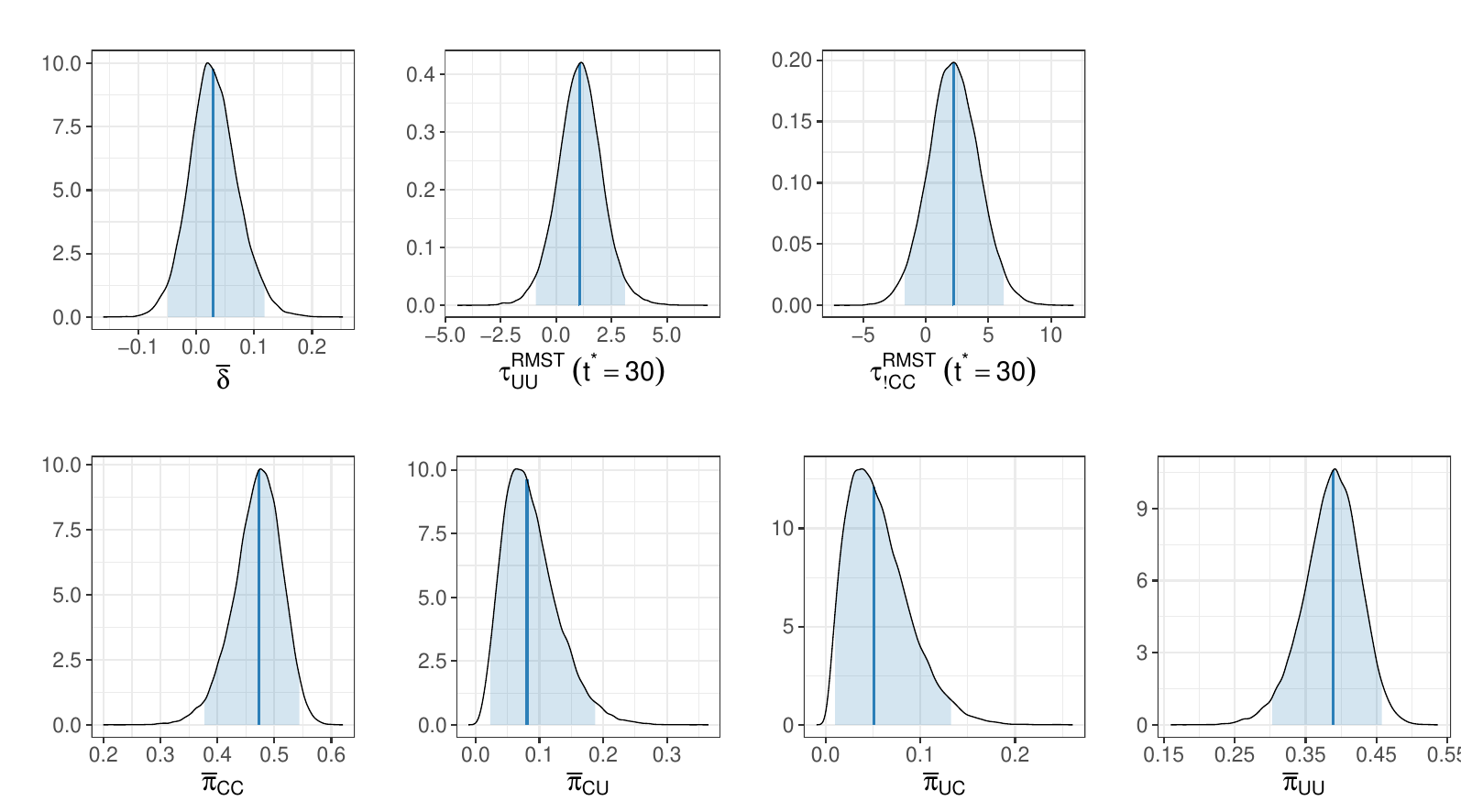}
    \caption{}
  \end{subfigure}
  \begin{subfigure}[b]{0.75\textwidth}
    \includegraphics[width=\linewidth]{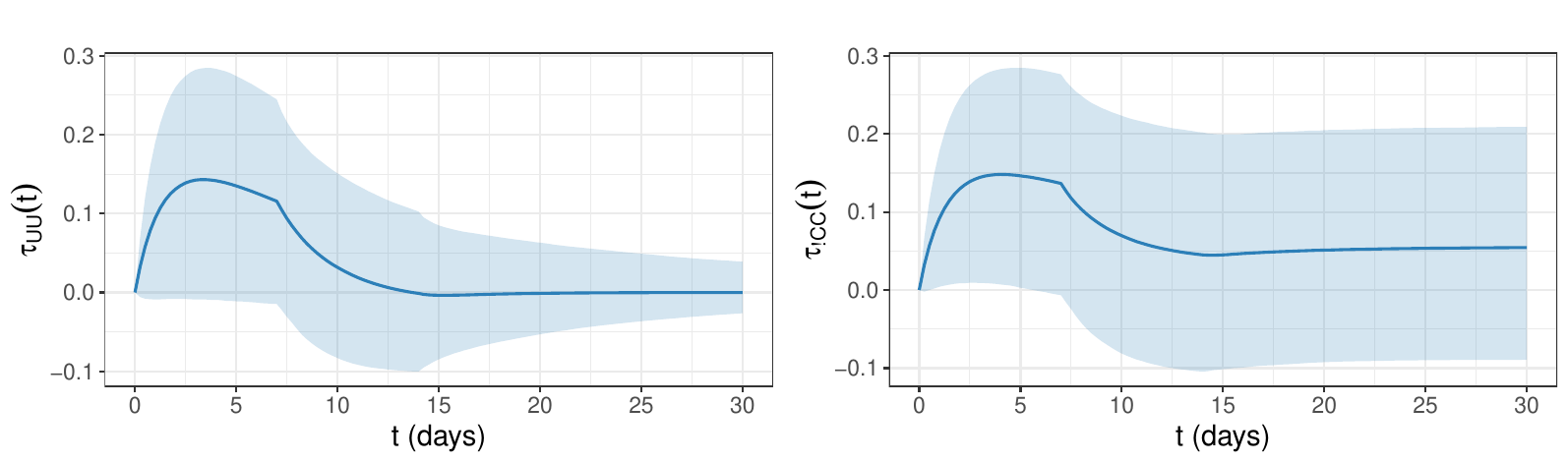}
    \caption{}
  \end{subfigure} 
  \begin{subfigure}[b]{0.75\textwidth}
    \includegraphics[width=\linewidth]{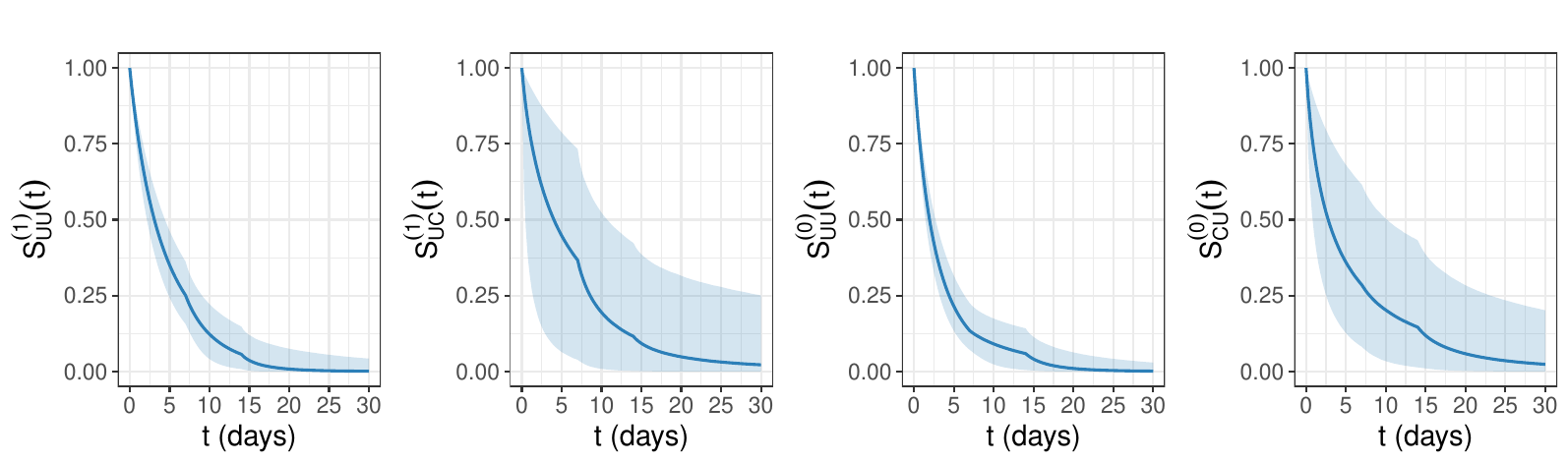}
    \caption{}
  \end{subfigure} 
  \begin{subfigure}[b]{0.75\textwidth}
    \includegraphics[width=\linewidth]{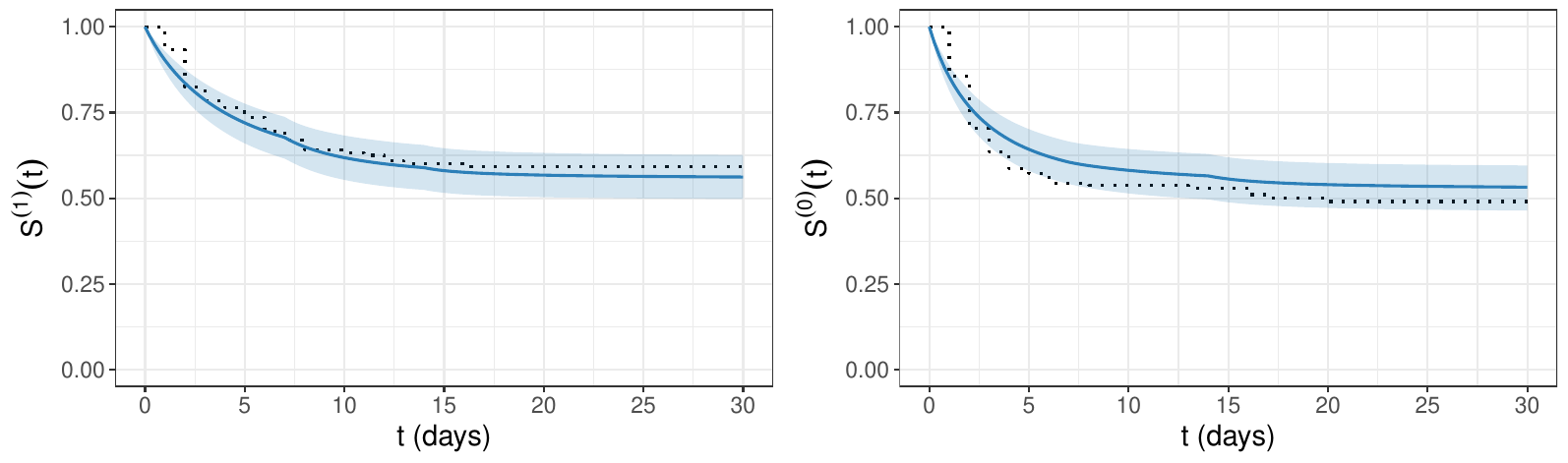}
    \caption{}
  \end{subfigure} 
  \caption{Model 2. Posterior summaries of the relevant quantities. 
  Panel (a): posterior distributions of the difference in cure probabilities, difference in RMST for $\mathrm{UU}$ and $\neg\mathrm{CC}$, and marginal principal strata probabilities. The vertical solid line indicates the posterior median, and the shaded area represents the 95\% credible interval. 
  Panel (b): posterior median (solid line) and 95\% credible intervals (shaded area) of the difference in survival probability for $\mathrm{UU}$ and $\neg\mathrm{CC}$ patients over time.
  Panel (c): posterior median (solid line) and 95\% credible intervals (shaded area) of marginal survival curves by principal stratum. 
  Panel (d): posterior median (solid line) and 95\% credible intervals (shaded area) of marginal survival curves by treatment group, where dotted lines show the Kaplan-Meier estimator of the corresponding survival curve.}
  \label{sfig:res_cure_nomonot}
\end{figure}

\begin{figure}[h]
  \centering
  \begin{subfigure}[b]{0.75\textwidth}
    \includegraphics[width=\linewidth]{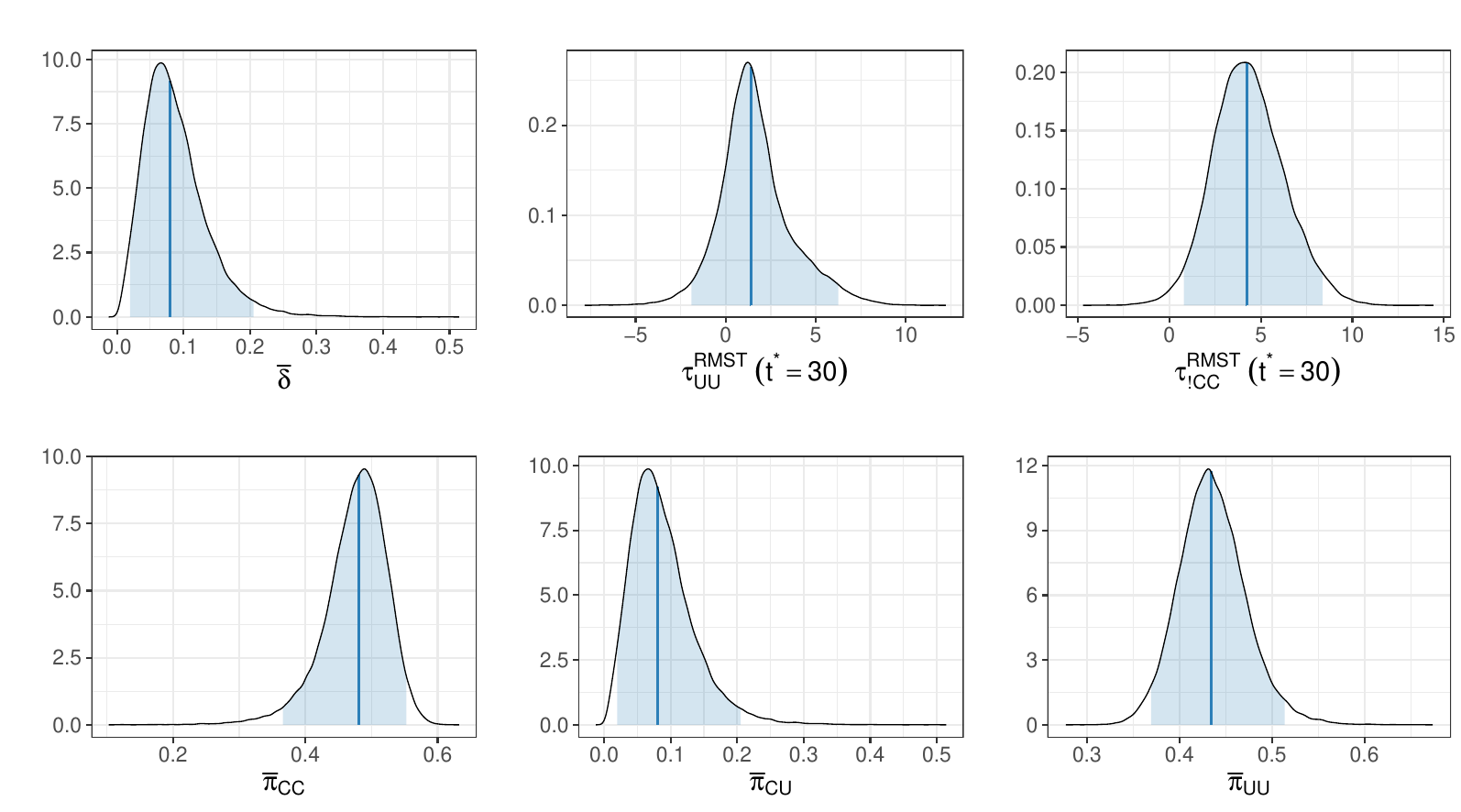}
    \caption{}
  \end{subfigure}
  \begin{subfigure}[b]{0.75\textwidth}
    \includegraphics[width=\linewidth]{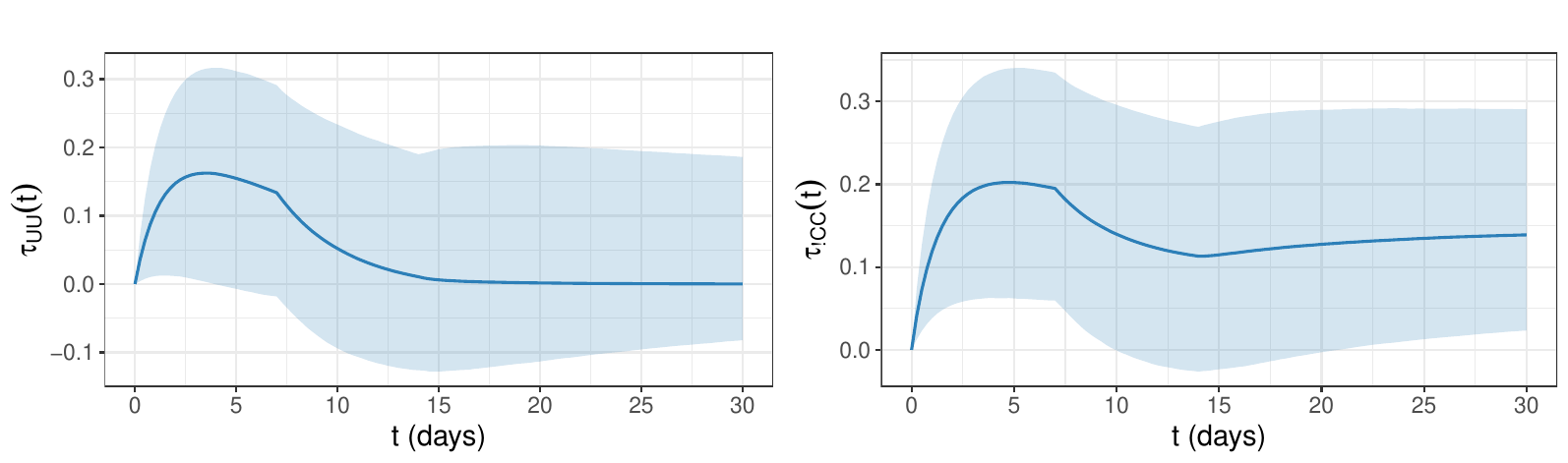}
    \caption{}
  \end{subfigure} 
  \begin{subfigure}[b]{0.75\textwidth}
    \includegraphics[width=\linewidth]{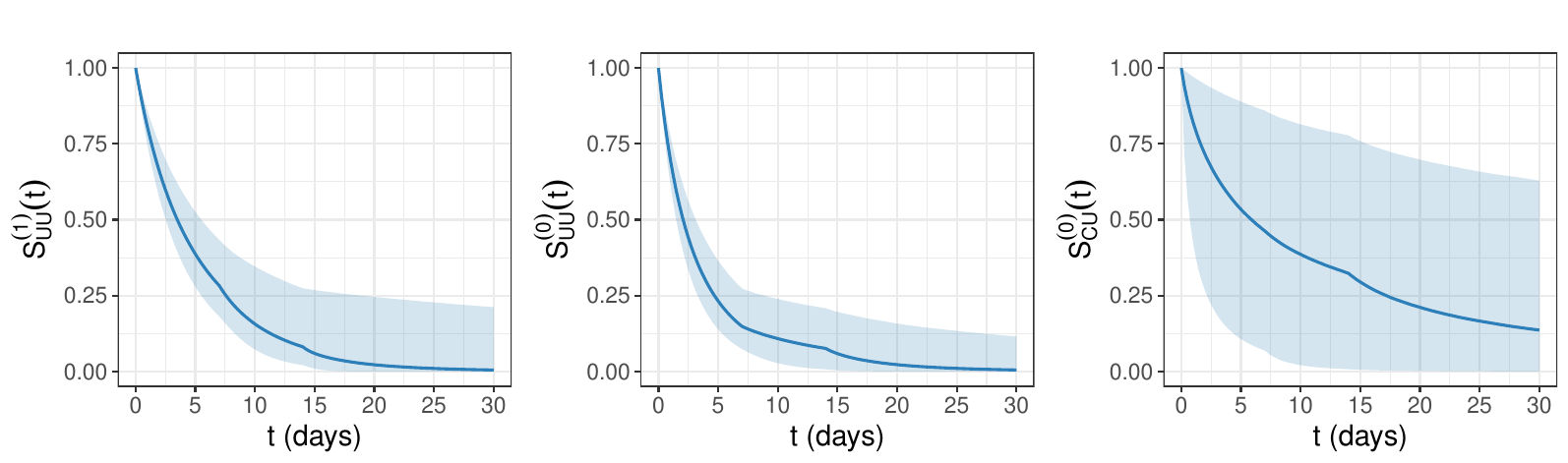}
    \caption{}
  \end{subfigure} 
  \begin{subfigure}[b]{0.75\textwidth}
    \includegraphics[width=\linewidth]{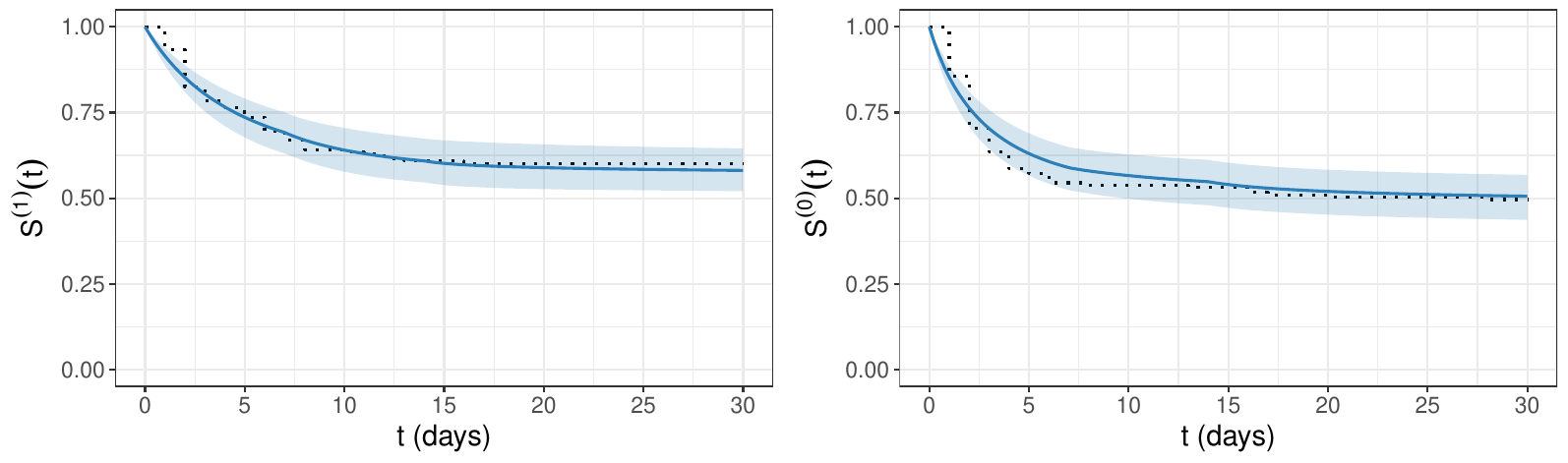}
    \caption{}
  \end{subfigure} 
  \caption{Model 3. Posterior summaries of the relevant quantities. 
  Panel (a): posterior distributions of the difference in cure probabilities, difference in RMST for $\mathrm{UU}$ and $\neg\mathrm{CC}$, and marginal principal strata probabilities. The vertical solid line indicates the posterior median, and the shaded area represents the 95\% credible interval. 
  Panel (b): posterior median (solid line) and 95\% credible intervals (shaded area) of the difference in survival probability for $\mathrm{UU}$ and $\neg\mathrm{CC}$ patients over time.
  Panel (c): posterior median (solid line) and 95\% credible intervals (shaded area) of marginal survival curves by principal stratum. 
  Panel (d): posterior median (solid line) and 95\% credible intervals (shaded area) of marginal survival curves by treatment group, where dotted lines show the Kaplan-Meier estimator of the corresponding survival curve.}
  \label{sfig:res}
\end{figure}

\begin{figure}[h]
  \centering
  \begin{subfigure}[b]{0.75\textwidth}
    \includegraphics[width=\linewidth]{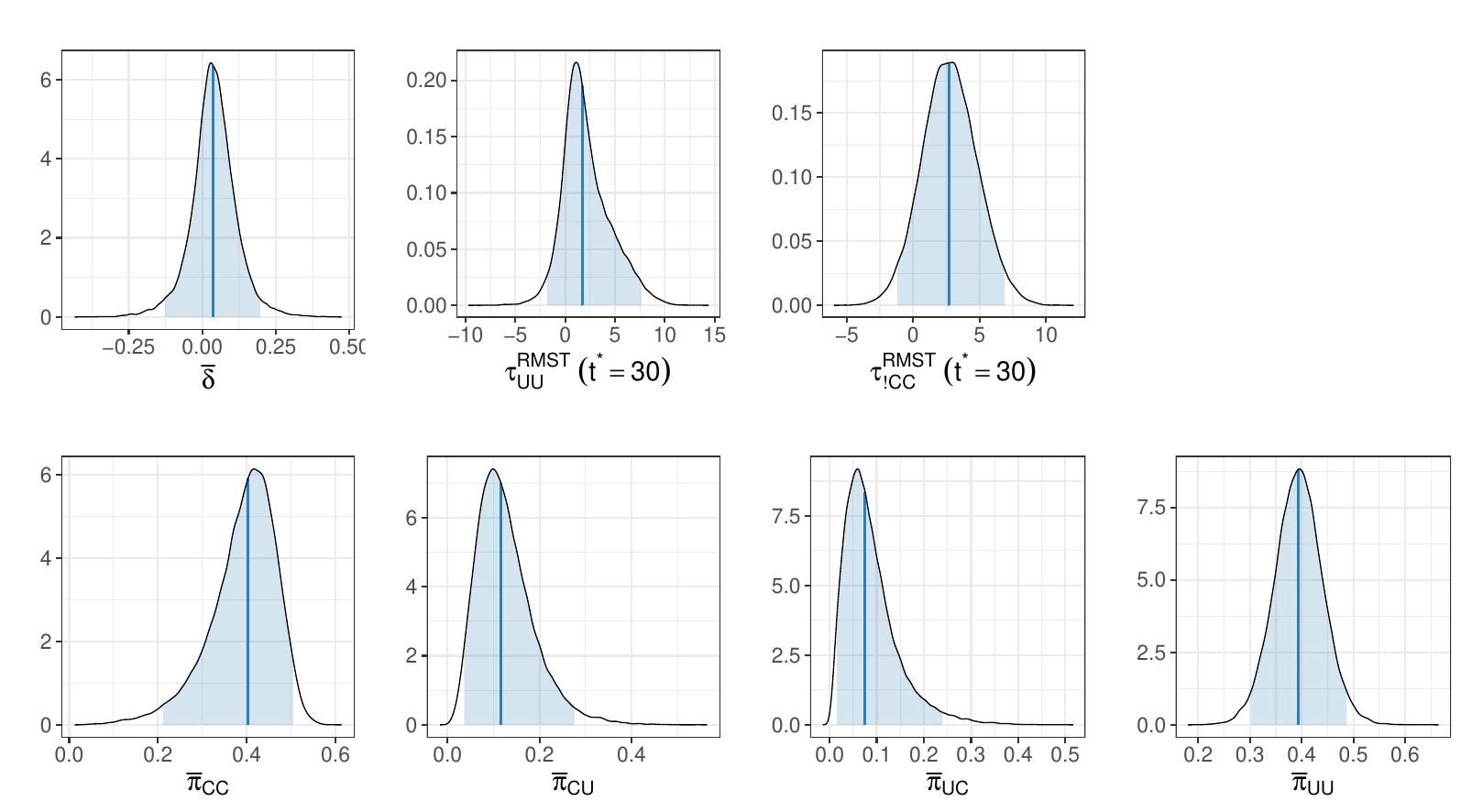}
    \caption{}
  \end{subfigure}
  \begin{subfigure}[b]{0.75\textwidth}
    \includegraphics[width=\linewidth]{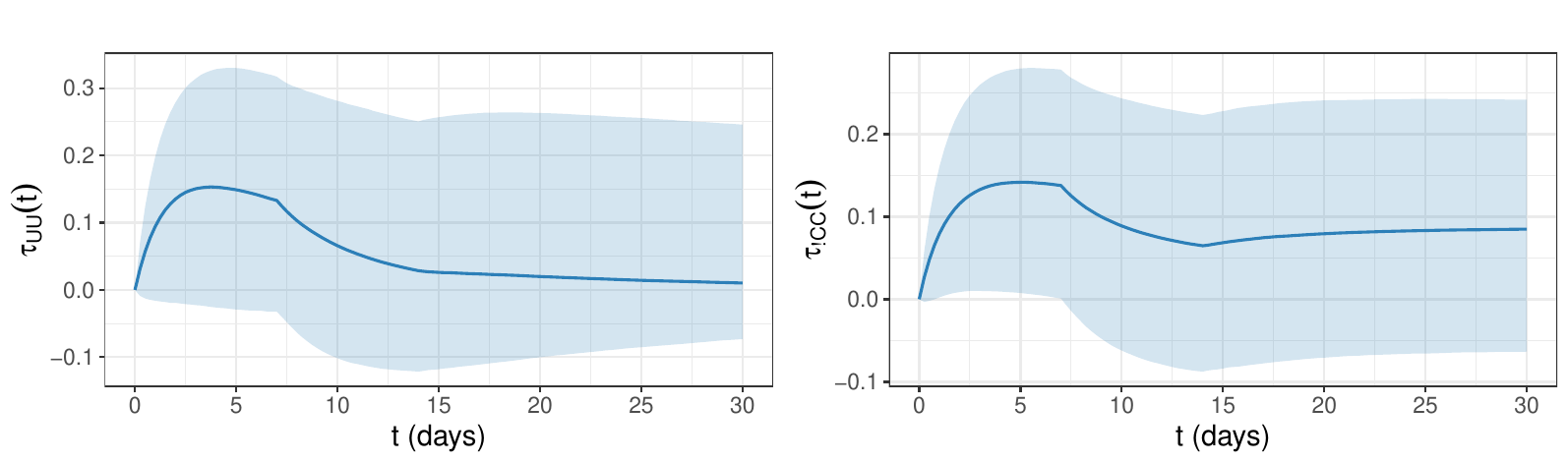}
    \caption{}
  \end{subfigure} 
  \begin{subfigure}[b]{0.75\textwidth}
    \includegraphics[width=\linewidth]{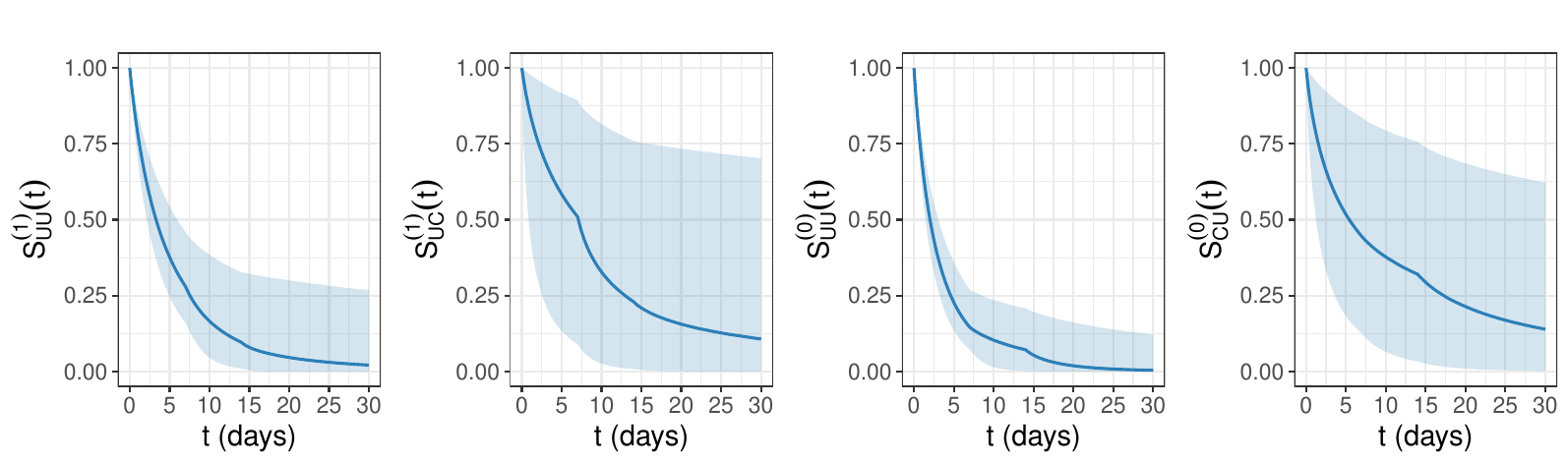}
    \caption{}
  \end{subfigure} 
  \begin{subfigure}[b]{0.75\textwidth}
    \includegraphics[width=\linewidth]{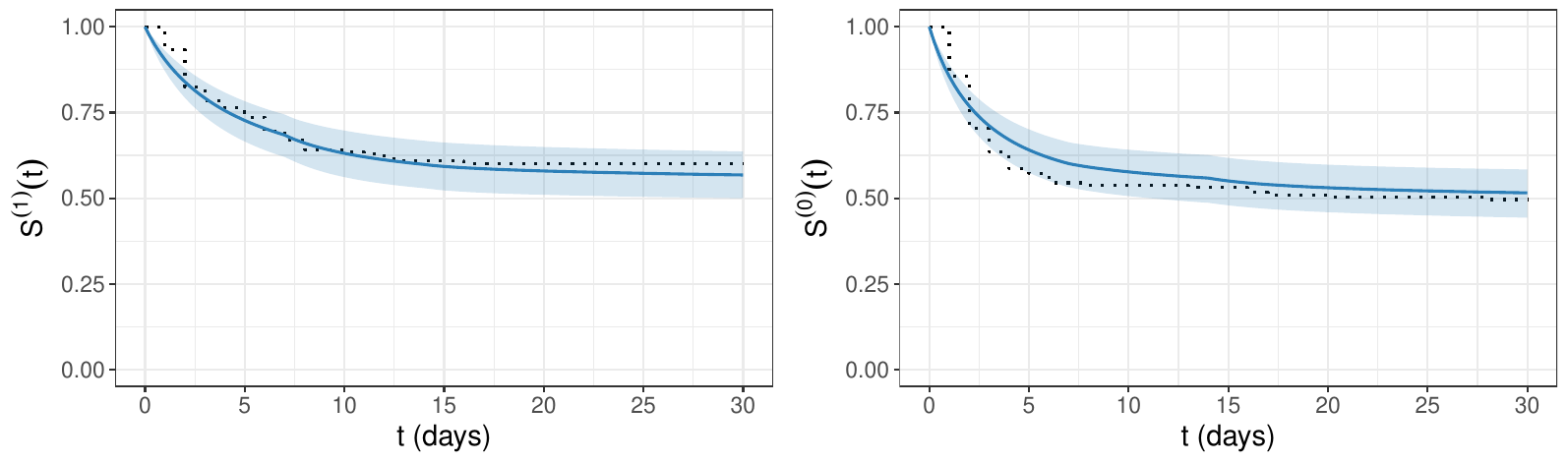}
    \caption{}
  \end{subfigure} 
  \caption{Model 4. Posterior summaries of the relevant quantities. 
  Panel (a): posterior distributions of the difference in cure probabilities, difference in RMST for $\mathrm{UU}$ and $\neg\mathrm{CC}$, and marginal principal strata probabilities. The vertical solid line indicates the posterior median, and the shaded area represents the 95\% credible interval. 
  Panel (b): posterior median (solid line) and 95\% credible intervals (shaded area) of the difference in survival probability for $\mathrm{UU}$ and $\neg\mathrm{CC}$ patients over time.
  Panel (c): posterior median (solid line) and 95\% credible intervals (shaded area) of marginal survival curves by principal stratum. 
  Panel (d): posterior median (solid line) and 95\% credible intervals (shaded area) of marginal survival curves by treatment group, where dotted lines show the Kaplan-Meier estimator of the corresponding survival curve.}
  \label{sfig:res_nomonot}
\end{figure}

\end{document}